\theoremstyle{plain}
\newtheorem{thm}{\protect\theoremname}
\theoremstyle{plain}
\newtheorem{prop}[thm]{\protect\propositionname}
\newenvironment{proof}[1][\protect\proofname]{\par
	\normalfont\topsep6\p@\@plus6\p@\relax
	\trivlist
	\itemindent\parindent
	\item[\hskip\labelsep\scshape #1]\ignorespaces
}{%
	\endtrivlist\@endpefalse
}
\providecommand{\proofname}{Proof}
\theoremstyle{plain}
\newtheorem{lem}[thm]{\protect\lemmaname}
\DeclareMathOperator*{\argmax}{arg\,max}
\providecommand{\lemmaname}{Lemma}
\providecommand{\propositionname}{Proposition}
\providecommand{\theoremname}{Theorem}
\begin{document}
\title{Dynamic time series clustering via volatility change-points}
\author{Nick Whiteley\\
School of Mathematics, University of Bristol\\
and the Alan Turing Institute }
\maketitle
\begin{abstract}
This note outlines a method for clustering time series based on a
statistical model in which volatility shifts at unobserved change-points.
The model accommodates some classical stylized features of returns
and its relation to GARCH is discussed. Clustering is performed using
a probability metric evaluated between posterior distributions of
the most recent change-point associated with each series. This implies
series are grouped together at a given time if there is evidence the
most recent shifts in their respective volatilities were coincident
or closely timed. The clustering method is dynamic, in that groupings
may be updated in an online manner as data arrive. Numerical results
are given analyzing daily returns of constituents of the S\&P 500.
\end{abstract}

\section{Introduction\label{sec:Introduction}}

The purpose of this note is to outline, contextualize and demonstrate
a method for clustering time series using a change-point model. The
emphasis is on conveying the ideas of the method rather than depth
or generality, although some possible extensions and research directions
are given at the end in section \ref{sec:Extensions-and-directions}.
A Python implementation in the form of a Jupyter notebook is available:
\url{https://github.com/nckwhiteley/volatility-change-points}.

\subsection{Time series clustering}

Time series clustering is typically a two-step procedure. The first
step is to specify a pairwise measure of dissimilarity between series.
An overview of several popular approaches is given in \citep[Sec. 2]{montero2014tsclust}.
To mention just a few examples, this dissimilarity could be derived
from fairly simple statistics, such as cross-correlation; could involve
solving an optimization problem to find a `best' match between each
pair of series, for instance using the Fréchet distance or Dynamic
Time Warping \citep{berndt1994using}; or could involve fitting a
some form of model to each of the series, then computing a distance
between the fitted parameter values \citep{corduas2008time,otranto2008clustering}
or forecast distributions \citep{alonso2006time,vilar2010non}.

The second step is to pass the dissimilarity measure to an algorithm
which determines associations between the series. Again to mention
just a few popular techniques, hierarchical methods such as agglomerative
clustering, see for example \citep[Sec. 25.5]{murphy2012machine}
for an overview, form clusters sequentially. Each datum starts in
its own cluster and pairs of clusters are merged step-by-step in accordance
with some linkage criterion which quantifies how between-cluster dissimilarity
is derived from between-series dissimilarity. Centroid-based techniques
such as $k$-means \citep{macqueen1967some} or its generalizations
beyond Euclidean distance to, e.g., Bregman divergences \citep{banerjee2005clustering}
or Wasserstein distances \citep{ye2017fast} choose a collection of
cluster centers to minimize the sum of within-cluster divergences/distances.
The computational cost of global minimization is usually prohibitive
and so for implementation one settles for a local minimum obtained
using an iterative refinement method, such as Lloyd's algorithm \citep{lloyd1982least}
in the case of Euclidean distance.

A further level of sophistication is to approach clustering as a statistical
inference problem, with associations between data points and clusters
treated as latent variables to be inferred under a probabilistic model.
This allows uncertainty over clusterings, model parameters and model
structure to be quantified and reported in a principled manner. The
price to pay is usually an increased computational cost, for example
incurred through the EM algorithm, variational methods or Monte Carlo
sampling. The question of how to scale-up these methods to tackle
large data sets is an active topic of research. For a recent overview
and ideas involving parallelization and multi-step procedures, see
\citep{ni2018scalable}.

We propose a method which may be regarded as a half-way point between
a full-blown statistical treatment of time series clustering and the
simple two-step recipe described above. We do not perform probabilistic
modeling of associations, but we do perform probabilistic modeling
on a per-series basis and use it to define a notion of dissimilarity.

\subsection{Financial time series clustering}

Clustering of time series can serve a variety of purposes. In exploratory
data analysis one may simply want to discover groupings or unexpected
phenomena, and then summarize or report them for purposes of dimension
reduction or interpretation. Clustering may be one ingredient within
a broader statistical workflow, in which actions or decisions are
taken on the basis of discovered clusters.

Stemming from an influential paper of \citet{mantegna1999hierarchical},
clustering of financial time series using dissimilarity measures derived
from correlation has been applied to assist fundamental understanding
of markets, risk management, portfolio optimization and trading. A
comprehensive overview of research on this topic across machine learning,
econophysics, statistical physics, econometrics and behavioral finance
is maintained on arXiv by \citet{marti2017review}. The current version
includes a bibliography of over 400 references which we shall not
attempt to summarize.

An alternative approach to time series clustering, which does not
feature in \citep{marti2017review}, is to define dissimilarity by
some distance between parameter vectors obtained by fitting a model
to each of the series individually. \citet{otranto2008clustering}
gives a detailed account of dissimilarity measures in this vein, and
uses Wald tests and autoregressive metrics to measure the distance
between GARCH processes and thus cluster based on the heteroskedastic
characteristics. \citet{otranto2010identifying} extends this technique
to clustering based on distance between fitted Dynamic Conditional
Correlation models, and deploys the resulting covariance matrix estimates
within portfolio optimization.

As discussed by \citet{marti2016clustering} and \citet[Sec 4]{marti2017review},
a research topic still in its infancy but of considerable interest
is how to track changes in market structure, by recognizing clusters
which may change over time. Indeed \citet[Sec 4]{marti2017review}
report that many empirical studies do not achieve this but just deliver
a static clustering based all data available for a given time period.
An obvious step towards dynamic clustering is to apply a static clustering
method on a sliding window. If, for example, the clustering techniques
of \citet{otranto2008clustering,otranto2010identifying} were applied
in this manner, the length of the window would achieve a trade-off
between temporal locality and noisy parameter estimates, hence noisy
estimates of dissimilarity. The question of how long the window should
be in order to best deal with time-varying clusters is often not an
easy one to answer rigorously.

The method introduced below defines dissimilarity between time series
not in terms of correlation or parameter estimates, but rather in
terms of evidence about times of volatility change-points. As a consequence,
time series which evidence shifts in volatility around the same points
in times tend to be clustered together. This is of interest because
synchrony in volatility change-points across series may arise from
common underlying market factors or similar responses to changing
market conditions which the method may help to uncover. One appealing
feature of the method is that it naturally accommodates dynamic clustering,
in the sense that clusters can be re-evaluated at each point in time
as new data arrive, but it circumvents the need to work on a sliding
window: the underlying change-point model effectively adapts to the
time-scale of volatility changes of each series.

\section{The change-point model and dissimilarity measure}

\subsection{A generic change-point model for a single time series\label{subsec:A-generic-change-point}}

Consider a sequence of unobserved, integer valued and strictly increasing
change-points $(T_{n})_{n\in\mathbb{N}_{0}}$. $T_{0}$ is equal to
zero with probability one, and the increments $(T_{n}-T_{n-1})_{n\geq1}$
are i.i.d. with c.d.f. denoted by $G$.

For $t\in\mathbb{N}_{0}$, define $N(t)\coloneqq\sup\{n\geq0:T_{n}<t\}$,
$\tau_{t}\coloneqq T_{N(t)}$ and observe that $(\tau_{t})_{t\geq1}$
is a Markov chain, with transition probabilities:
\begin{equation}
p(\tau_{t+1}=s|\tau_{t}=u)=\begin{cases}
\frac{G(t-u)-G(t-1-u)}{1-G(t-1-u)}, & s=t,\\
\frac{1-G(t-u)}{1-G(t-1-u)}, & s=u\in\{0,\ldots,t-1\},\\
0, & \mathrm{otherwise},
\end{cases}\label{eq:tau_prior}
\end{equation}
corresponding to whether a new change-point has occurred or not.

Let $(y_{t})_{t\in\mathbb{N}_{0}}$ be observed returns which are
assumed to be jointly distributed with $(\tau_{t})_{t\geq1}$ such
that for each $t\geq1$,
\begin{equation}
p(\tau_{1:t},y_{0:t})=p(y_{0})\prod_{s=1}^{t}p(\tau_{s}|\tau_{s-1})p(y_{s}|\tau_{s},y_{0:s-1}),\label{eq:joint_dist}
\end{equation}
with the convention $p(\tau_{1}|\tau_{0})\equiv\delta_{0}(\tau_{1})$,
the Kronecker delta at $0$, to respect the fact that $T_{0}$ is
zero with probability $1$.

Consider the sequence of probability mass functions $(\pi_{t})_{t\geq1}$,
\begin{equation}
\pi_{t}(s)\coloneqq p(\tau_{t}=s|y_{0:t}).\label{eq:filtering_distribution}
\end{equation}
Again due to the fact that $T_{0}$ is zero with probability $1$,
we have $\pi_{1}(0)=1$. Combining the conditional independence structure
of (\ref{eq:joint_dist}) with (\ref{eq:tau_prior}), elementary marginalization
and Bayes' rule validate the following recursion, for $t\geq1$, 
\begin{equation}
\pi_{t+1}(s)\propto\begin{cases}
p(y_{t+1}|\tau_{t+1}=t,y_{0:t})\sum_{u=0}^{t-1}\left[\frac{G(t-u)-G(t-1-u)}{1-G(t-1-u)}\pi_{t}(u)\right], & s=t,\\
p(y_{t+1}|\tau_{t+1}=s,y_{0:t})\frac{1-G(t-s)}{1-G(t-1-s)}\pi_{t}(s), & s\in\{0,\ldots,t-1\},\\
0, & \mathrm{otherwise}.
\end{cases}\label{eq:recursion}
\end{equation}

This change-point model and the recursion (\ref{eq:recursion}) are
directly inspired by those of \citet{chopin2007dynamic}, \citet{fearnhead2007line}
and \citet{adams2007bayesian}. Our model is slightly more general
than those of \citet{fearnhead2007line} and \citet{adams2007bayesian},
who assumed that conditional on a change-point time, observations
after that time are independent of observations before. Note (\ref{eq:joint_dist})
does not imply such independence.

In section \ref{subsec:A-particular-instance} we describe an instance
of the above change-point model in which the terms $p(y_{t+1}|\tau_{t+1},y_{0:t})$
arise by analytically integrating out parameters associated with the
change-points under conjugate prior distributions. This makes our
setting more restrictive than that of \citet{chopin2007dynamic},
who did not assume such analytic integration possible, but instead
used numerical integration in the form of a sequential Monte Carlo
algorithm.

\subsection{The dissimilarity measure}

Suppose now that one is presented with $m\geq1$ series of returns
$\{(y_{t}^{i})_{t\in\in\mathbb{N}_{0}},i=1,\ldots,m\}$. Let $\pi_{t}^{i}$
be as in (\ref{eq:filtering_distribution}) with $(y_{t})_{t\in\in\mathbb{N}_{0}}$
there replaced by $(y_{t}^{i})_{t\in\in\mathbb{N}_{0}}$. We propose
to cluster the series at any given time $t$ with dissimilarity taken
to be a probability metric evaluated between the distributions $\{\pi_{t}^{i},i=1,\ldots,m\}$.
Thus series will be clustered at time $t$ if, through $\{\pi_{t}^{i},i=1,\ldots,m\}$,
they exhibit similar evidence about the times of their respective
most recent change-points.

Which probability metric to choose? It seems sensible to consider:
i) interpretation and ii) computational overhead. To address the first
of these two criteria, consider the total variation distance: 
\[
\mathrm{TV}(\pi^{1},\pi^{2})=\frac{1}{2}\sum_{s\in\mathbb{N}_{0}}|\pi^{1}(s)-\pi^{2}(s)|.
\]
The total variation distance is maximal and equal to $1$ as soon
as $\pi^{1}$ and $\pi^{2}$ have disjoint support, which is a rather
restrictive notion of dissimilarity. For instance, for two Kronecker
delta's $\pi^{1}=\delta_{t}$ and $\pi^{2}=\delta_{t+s}$,
\begin{equation}
\mathrm{TV}(\delta_{t},\delta_{t+s})=1,\quad\text{if}\quad|s|\neq0.\label{eq:tv}
\end{equation}

For describing distributions over change-points this insensitivity
to translation seems undesirable - a more appealing property might
be that the distance is strictly increasing in $|s|$. Alternatives
such as the Hellinger and $L_{2}$ distances involve similarly summing
of point-wise differences between probability mass functions, or functions
thereof, and hence have the same drawback. Divergences which involve
ratios of probability mass functions such as $\chi^{2}$ or Kullback-Liebler
similarly fail to express dissimilarity if the support of one mass
function is not contained within that of the other.

The total variation distance can be regarded as one instance of a
Wasserstein distance: given a distance $d(\cdot,\cdot)$ on $\mathbb{N}_{0}$
and $p\geq1$, the $p$'th Wasserstein distance associated with $d$
is:
\begin{equation}
W_{p}(\pi^{1},\pi^{2})\coloneqq\left(\inf_{\gamma\in\Gamma(\pi^{1},\pi^{2})}\sum_{(s,t)\in\mathbb{N}_{0}\times\mathbb{N}_{0}}d(s,t)^{p}\gamma(s,t)\right)^{1/p},\label{eq:wass_dist_defn}
\end{equation}
where $\Gamma$ is the set of all probability mass functions on $\mathbb{N}_{0}\times\mathbb{N}_{0}$
whose marginals are $\pi^{1}$ and $\pi^{2}$. The total variation
distance arises if one takes $d$ to be the discrete distance $d(s,t)=\mathbf{1}_{\{s\neq t\}}$
and $p=1$.

If instead $d$ is the usual distance on $\mathbb{N}_{0}$, $d(s,t)=|s-t|$,
we have
\begin{equation}
W_{p}(\delta_{t},\delta_{t+s})=|s|,\label{eq:wass_delta}
\end{equation}
and slightly more generally it can be shown by a direct computation
of the infimum in (\ref{eq:wass_dist_defn}) that:
\begin{equation}
W_{p}(a\delta_{s}+(1-a)\delta_{t},b\delta_{s}+(1-b)\delta_{t})=|a-b|^{1/p}|s-t|,\label{eq:wass_delta-1}
\end{equation}
see \citep[Ex 2.3]{bobkov2014one}. Whilst (\ref{eq:wass_delta})
and (\ref{eq:wass_delta-1}) are of course rather specific examples,
they illustrate the manner in which the Wasserstein distance associated
with $d(s,t)=|s-t|$ is more expressive than total variation distance
regarding translation, and therefore arguably more suited to our purposes
of comparing distributions over change-point times.

Turning to the criterion of computational overhead, in the case of
$d(s,t)=|s-t|$ on $\mathbb{N}_{0}$, the Wasserstein distance is
conveniently available in closed form:
\[
W_{p}(\pi^{1},\pi^{2})=\left(\int_{0}^{1}|F_{1}^{-1}(v)-F_{2}^{-1}(v)|^{p}\mathrm{d}v\right)^{1/p}
\]
where $F_{i}^{-1}(v)=\inf\{t\in\mathbb{N}_{0}:F_{i}(t)\geq v\}$ is
the generalized inverse c.d.f. of $\pi^{i}$. Even more conveniently
from a computational point of view, is the fact that:
\begin{equation}
W_{1}(\pi^{1},\pi^{2})=\sum_{s\in\mathbb{N}_{0}}|F_{1}(s)-F_{2}(s)|,\label{eq:wass_1}
\end{equation}
see \citep{bobkov2014one} for background.

With these considerations we shall settle on (\ref{eq:wass_1}) applied
to each pair $\pi_{t}^{i},\pi_{t}^{j}$ as our dissimilarity measure
at time $t$. Note that the support of any $\pi_{t}^{i}$ is always
contained in $\{0,1,\ldots,t-1\}$. Moreover, if the approximation
technique suggested in section \ref{subsec:Online-implementation}
is applied, each approximating distribution $\hat{\pi}_{t}^{i}$ (more
details later) will have a number of support points uniformly upper
bounded in t, and hence the cost of evaluating $W_{1}(\hat{\pi}_{t}^{i},\hat{\pi}_{t}^{j})$
is uniformly upper bounded in $t.$

Choosing the dissimilarity measure completes the first of the two
steps described in section \ref{sec:Introduction}. What options are
available for the second step? Hierarchical clustering can be performed
immediately after evaluating the pairwise distances and we shall illustrate
this approach through numerical experiments. For a centroid-based
approach in the style of $k$-means, one needs to introduce the notion
of Wasserstein barycentre, which is the Fréchet mean in the space
of probability distributions equipped with the Wasserstein distance.
Computing these barycentres is a non-trivial task in general, see
\citep{peyre2019computational} for numerical methods.

\subsection{Online implementation\label{subsec:Online-implementation}}

The number of terms in the summation in (\ref{eq:recursion}) clearly
increases linearly with $t$. Hence the cost of computing this recursion
from time zero up to some time $t$ is quadratic in $t$. A simple
route towards an online algorithm, i.e. one whose computational cost
per time step does not increase with time, is to introduce an approximation
to each $\pi_{t}$ with a number of support points uniformly upper
bounded in $t$.

For instance, consider a simple pruning strategy: fix a number of
support points $n\geq1$. For $t\leq n$, computer $\pi_{t}$ exactly
using the recursion (\ref{eq:recursion}). For $t>n$ assume one already
has an approximation to $\pi_{t}$, call it $\hat{\pi}_{t}$ which
has $n$ support points in $\{0,\ldots,t-1\}$. Then one can substitute
$\hat{\pi}_{t}$ for $\pi_{t}$ in the right hand side of (\ref{eq:recursion}),
and retain the $n$ of the $n+1$ resulting support points associated
with highest probabilities to give an approximation $\hat{\pi}_{t+1}$
to $\pi_{t+1}$.

A further consideration for online implementation of (\ref{eq:recursion})
is the cost of evaluating $p(y_{t+1}|\tau_{t+1},y_{0:t})$, does this
also increase with $t$? In the instance of the change-point model
described in section \ref{subsec:A-particular-instance}, we shall
show that $p(y_{t+1}|\tau_{t+1},y_{0:t})$ depends on $y_{0:t}$ through
statistics which can be updated online as data arrive, and hence it
is possible to evaluate the terms $p(y_{t+1}|\tau_{t+1},y_{0:t})$
sequentially in $t$ at a fixed cost per time step.

\subsection{A particular instance of the change-point model\label{subsec:A-particular-instance}}

Let $(T_{n})_{n\geq0}$ be distributed as in section \ref{subsec:A-generic-change-point}.
We now introduce a specific model for the returns $(y_{t})_{t\geq0}$
which, upon analytically marginalizing out certain parameters, will
satisfy (\ref{subsec:A-generic-change-point}) with a closed-form
expression for $p(y_{t+1}|\tau_{t+1},y_{0:t})$. In turn, this can
be plugged into the recursion (\ref{eq:recursion}) or its approximation
discussed in section \ref{subsec:Online-implementation}, in order
to evaluate the dissimilarity measure.

Consider a sequence of triples of parameters $(\mu_{n},\alpha_{n},\sigma_{n}^{2})_{n\in\mathbb{N}_{0}}$
and assume that:
\begin{equation}
y_{t}=\mu_{N(t)}+\alpha_{N(t)}y_{t-1}+\sigma_{N(t)}\epsilon_{t},\label{eq:r_t_eqn}
\end{equation}
where $(\epsilon_{t})_{t\geq1}$ are i.i.d. $\mathsf{N}(0,1)$. Thus
$(\mu_{n},\alpha_{n},\sigma_{n}^{2})$ parameterize the conditional
joint distribution of the data between the $n$'th and $(n+1)$'th
change-points, i.e. $(y_{T_{n}+1},\ldots,y_{T_{n+1}})$, given $y_{T_{n}}$.

We assume the following prior independence properties: the sequence
$(\mu_{n},\alpha_{n},\sigma_{n}^{2})_{n\geq0}$ and the sequence $(T_{n})_{n\geq0}$
are independent, and the triples $(\mu_{n},\alpha_{n},\sigma_{n}^{2})_{n\geq0}$
are independent across $n$. It can be shown that as a consequence
of these independences and (\ref{eq:r_t_eqn}),
\begin{align}
p(y_{t+1}|\tau_{t+1}=s,y_{0:t}) & =p(y_{t+1}|\tau_{t+1}=s,y_{s:t}),\label{eq:cond_ind_pred}\\
p(\mu_{N(t)},\alpha_{N(t)},\sigma_{N(t)}^{2}|\tau_{t}=s,y_{0:t}) & =p(\mu_{N(t)},\alpha_{N(t)},\sigma_{N(t)}^{2}|\tau_{t}=s,y_{s:t}).\label{eq:cond_ind_params}
\end{align}
The intuitive interpretation of these identities is that conditional
on the time of the most-recent change-point being $s$, data strictly
prior to $s$ are irrelevant to: predicting the next data point, as
per (\ref{eq:cond_ind_pred}), and inference for $\mu_{N(t)},\alpha_{N(t)},\sigma_{N(t)}^{2}$,
i.e., the parameters associated with the most recent change-point,
as per (\ref{eq:cond_ind_params}). 

To arrive at a closed-form expression for $p(y_{t+1}|\tau_{t+1}=s,y_{s:t})$
we set a zero-mean Normal-Inverse-Gamma prior distribution on each
parameter triple:

\begin{equation}
p(\mu_{n},\alpha_{n},\sigma_{n}^{2})=\frac{1}{2\pi|V_{0}|^{1/2}}\frac{b^{a}}{\Gamma(a)}\left(\frac{1}{\sigma_{n}^{2}}\right)^{a+2}\exp\left(-\frac{2b+\beta_{n}^{\mathrm{T}}V_{0}^{-1}\beta_{n}}{2\sigma_{n}^{2}}\right),\label{eq:prior_NIG}
\end{equation}
where $\beta_{n}\coloneqq[\mu_{n}\;\alpha_{n}]^{\mathrm{T}}$, $V_{0}\coloneqq\mathrm{diag(\delta_{0}^{2},\delta_{1}^{2})},$
and $a,b,\delta_{0},\delta_{1}$ are hyper-parameters which are common
across $n$.

The following proposition gives the expression for $p(y_{t+1}|\tau_{t+1}=s,y_{s:t})$
as desired, and marginal posterior densities for the parameters $\beta_{N(t)}$
and $\sigma_{N(t)}^{2}$ conditional on the time of the most recent
change-point.
\begin{prop}
\begin{align}
 & p(y_{t+1}|\tau_{t+1}=s,y_{s:t})=\mathsf{St}\left(2a_{s,t},h_{t+1}w_{s,t},\frac{b_{s,t}}{a_{s,t}}(1+h_{t+1}V_{s,t}h_{t+1}^{\mathrm{T}})\right),\label{eq:posterior_predictive}\\
 & p(\beta_{N(t)}|\tau_{t}=s,y_{s:t})=\mathsf{St}\left(2a_{s,t},w_{s,t},\frac{b_{s,t}}{a_{s,t}}V_{s,t}\right),\label{eq:parameter_posterior}\\
 & p(\sigma_{N(t)}^{2}|\tau_{t}=s,y_{s:t})=\mathsf{IG}(a_{s,t},b_{s,t}),\label{eq:sig^2_posterior}
\end{align}
where
\begin{align}
 & w_{s,t}\coloneqq V_{s,t}H_{s,t}^{T}y_{s+1:t},\label{eq:posterior_mean}\\
 & V_{s,t}\coloneqq(V_{0}^{-1}+H_{s,t}^{\mathrm{T}}H_{s,t})^{-1},\label{eq:posterior correlation}\\
 & a_{s,t}\coloneqq a+\frac{t-s}{2},\label{eq:posterior shape}\\
 & b_{s,t}\coloneqq b+\frac{1}{2}(\|y_{s+1:t}\|^{2}-w_{s,t}^{\mathrm{T}}V_{s,t}^{-1}w_{s,t}),\label{eq:posterior_scale}
\end{align}
$H_{s,t}\coloneqq[h_{t}^{\mathrm{T}}\cdots h_{s+1}^{\mathrm{T}}]^{\mathrm{T}}$
, $h_{t}\coloneqq[1\;y_{t-1}]$, and $y_{s+1:t}\equiv[y_{t}\;y_{t-1}\;\cdots\;y_{s+1}]^{\mathrm{T}}$.
\end{prop}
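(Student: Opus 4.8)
The plan is to recognize (\ref{eq:posterior_predictive})--(\ref{eq:sig^2_posterior}) as the output of a standard conjugate Bayesian linear regression, and to verify that the particular update formulae (\ref{eq:posterior_mean})--(\ref{eq:posterior_scale}) emerge with the stated constants. First I would invoke the conditional independence identities (\ref{eq:cond_ind_pred}) and (\ref{eq:cond_ind_params}), which are already available, to justify conditioning throughout on $y_{s:t}$ rather than $y_{0:t}$. The key observation is that on the event $\tau_{t+1}=s$ the observations $y_{s+1},y_{s+2},\ldots$ up to and including $y_{t+1}$ are generated by (\ref{eq:r_t_eqn}) with a single shared triple $(\mu,\alpha,\sigma^{2})=(\mu_{N(t)},\alpha_{N(t)},\sigma_{N(t)}^{2})$, so that writing $\beta=[\mu\;\alpha]^{\mathrm{T}}$ and $h_{r}=[1\;y_{r-1}]$ we have $y_{r}=h_{r}\beta+\sigma\epsilon_{r}$ for $r\in\{s+1,\ldots,t+1\}$, with $y_{s}$ entering merely as a regressor (initial condition). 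Stacking the rows for $r=t,\ldots,s+1$ yields the linear model $y_{s+1:t}=H_{s,t}\beta+\sigma\varepsilon$ with $\varepsilon\sim\mathsf{N}(0,I)$ conditional on $(\beta,\sigma^{2})$ and $y_{s}$.

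Next I would multiply the resulting Gaussian likelihood
\[
(\sigma^{2})^{-(t-s)/2}\exp\!\left(-\tfrac{1}{2\sigma^{2}}\|y_{s+1:t}-H_{s,t}\beta\|^{2}\right)
\]
by the Normal-Inverse-Gamma prior (\ref{eq:prior_NIG}) and collect terms in the exponent. Completing the square in $\beta$ turns the quadratic $\beta^{\mathrm{T}}(V_{0}^{-1}+H_{s,t}^{\mathrm{T}}H_{s,t})\beta-2\beta^{\mathrm{T}}H_{s,t}^{\mathrm{T}}y_{s+1:t}$ into $(\beta-w_{s,t})^{\mathrm{T}}V_{s,t}^{-1}(\beta-w_{s,t})-w_{s,t}^{\mathrm{T}}V_{s,t}^{-1}w_{s,t}$, which reads off $V_{s,t}=(V_{0}^{-1}+H_{s,t}^{\mathrm{T}}H_{s,t})^{-1}$ and $w_{s,t}$ as in (\ref{eq:posterior_mean}). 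The leftover constant in the exponent assembles into $2b_{s,t}$ with $b_{s,t}$ as in (\ref{eq:posterior_scale}), and the powers of $\sigma^{2}$ combine to give the shape $a_{s,t}=a+(t-s)/2$. Hence the joint posterior $p(\beta,\sigma^{2}\mid\tau_{t}=s,y_{s:t})$ is again Normal-Inverse-Gamma, now with parameters $(w_{s,t},V_{s,t},a_{s,t},b_{s,t})$.

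The three claimed identities are then marginals of this joint posterior. Integrating $\beta$ out of the Normal-Inverse-Gamma density leaves an $\mathsf{IG}(a_{s,t},b_{s,t})$ density for $\sigma_{N(t)}^{2}$, giving (\ref{eq:sig^2_posterior}); the factor $(\sigma^{2})^{+1}$ picked up from the two-dimensional Gaussian integral over $\beta$ is exactly what converts the power $-(a_{s,t}+2)$ into the $\mathsf{IG}$ shape $-(a_{s,t}+1)$. For (\ref{eq:parameter_posterior}) I would use the scale-mixture representation: conditional on $\sigma^{2}$, $\beta$ is $\mathsf{N}(w_{s,t},\sigma^{2}V_{s,t})$, and mixing over $\sigma^{2}\sim\mathsf{IG}(a_{s,t},b_{s,t})$ produces a multivariate Student-$t$ with $2a_{s,t}$ degrees of freedom, location $w_{s,t}$ and scale $(b_{s,t}/a_{s,t})V_{s,t}$. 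For the predictive (\ref{eq:posterior_predictive}), since $y_{t+1}=h_{t+1}\beta+\sigma\epsilon_{t+1}$ with $\epsilon_{t+1}$ independent of the block history, conditioning on $\sigma^{2}$ gives $y_{t+1}\sim\mathsf{N}(h_{t+1}w_{s,t},\sigma^{2}(1+h_{t+1}V_{s,t}h_{t+1}^{\mathrm{T}}))$ after marginalizing $\beta$, and the same $\mathsf{IG}$ mixing over $\sigma^{2}$ yields the stated Student-$t$.

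The computation is routine conjugacy, so there is no deep obstacle; the work is bookkeeping. The point requiring the most care is matching the exact parametrization of $\mathsf{St}(\cdot,\cdot,\cdot)$: I would verify, by tracking the normalizing constants when integrating a $\mathsf{N}(m,\sigma^{2}\Sigma)$ density against an $\mathsf{IG}(a,b)$ density, that the result is precisely $\mathsf{St}(2a,m,(b/a)\Sigma)$, so that the degrees of freedom $2a_{s,t}$ and the scale prefactor $b_{s,t}/a_{s,t}$ in (\ref{eq:posterior_predictive})--(\ref{eq:parameter_posterior}) come out correctly. A secondary check is the linear-algebra identity $V_{s,t}^{-1}w_{s,t}=H_{s,t}^{\mathrm{T}}y_{s+1:t}$, which makes the quadratic-in-$\beta$ completion consistent and makes the same quantity $w_{s,t}^{\mathrm{T}}V_{s,t}^{-1}w_{s,t}$ appear both in $b_{s,t}$ and, implicitly, in the predictive variance.
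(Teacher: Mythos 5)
Your proposal is correct and takes essentially the same route as the paper, which simply stacks the observations into the regression $y_{\tau_t+1:t}=H_{\tau_t,t}\beta_{N(t)}+\sigma_{N(t)}\epsilon$ and invokes standard conjugate Normal-Inverse-Gamma Bayesian linear regression results (citing \citealp[Sec 7.6.3]{murphy2012machine}). You merely fill in the conjugacy bookkeeping---completing the square, tracking the $(\sigma^{2})^{-(a+2)}$ exponent, and the Student-$t$ scale-mixture marginalization---that the paper's proof sketch leaves to the cited reference, and these details check out.
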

\begin{proof}
[Proof sketch.]Note from (\ref{eq:r_t_eqn}), 
\[
y_{\tau_{t}+1:t}=\left[\begin{array}{c}
y_{t}\\
\vdots\\
y_{\tau_{t}+1}
\end{array}\right]=H_{\tau_{t},t}\beta_{N(t)}+\sigma_{N(t)}\left[\begin{array}{c}
\epsilon_{t}\\
\vdots\\
\epsilon_{\tau_{t}+1}
\end{array}\right].
\]
The expressions in (\ref{eq:posterior_predictive})-(\ref{eq:sig^2_posterior})
can therefore be obtained by conditioning on $\tau_{t+1}=s$ or $\tau_{t}=s$,
and then applying standard results for Bayesian linear regression
under a Normal-Inverse-Gamma prior, see for example \citep[Sec 7.6.3]{murphy2012machine}.
\end{proof}
Further to the considerations in section \ref{subsec:Online-implementation}
it is important to notice that $(w_{s,t})_{t>s}$, $(V_{s,t})_{t>s}$,
$(a_{s,t})_{t>s}$, $(b_{s,t})_{t>s}$ can be calculated in a recursive
manner, so that the cost of evaluating each term of the form $p(y_{t+1}|\tau_{t+1},y_{0:t})$,
$p(y_{t+2}|\tau_{t+2},y_{0:t+1})$, etc. does not increase with $t$.
The following lemma gives the details.
\begin{lem}
For fixed $s\geq0$, 
\begin{align*}
V_{s,s+1} & =(V_{0}^{-1}+h_{s+1}^{\mathrm{T}}h_{s+1})^{-1}, & V_{s,t+1} & =V_{s,t}-\frac{V_{s,t}h_{t+1}^{\mathrm{T}}h_{t+1}V_{s,t}}{1+h_{t+1}V_{s,t}h_{t+1}^{\mathrm{T}}},\\
\tilde{y}_{s,s+1} & =y_{s+1}h_{s+1}^{\mathrm{T}}, & \tilde{y}_{s,t+1} & =\tilde{y}_{s,t}+y_{t+1}h_{t+1}^{\mathrm{T}},\\
\|y_{s+1:s+1}\|^{2} & =y_{s+1}^{2}, & \|y_{s+1:t+1}\|^{2} & =\|y_{s+1:t}\|^{2}+y_{t+1}^{2},\\
a_{s,s+1} & =a+\frac{1}{2}, & a_{s,t+1} & =a_{s,t}+\frac{1}{2},
\end{align*}
and for $t\geq s$,
\[
w_{s,t}=V_{s,t}\tilde{y}_{s,t}.
\]
\end{lem}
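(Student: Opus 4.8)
The plan is to reduce every claim to the single structural observation that the Gram matrix $H_{s,t}^{\mathrm{T}}H_{s,t}$ and the cross-term $H_{s,t}^{\mathrm{T}}y_{s+1:t}$ are \emph{sums over the rows} of $H_{s,t}$, and are therefore insensitive to the order in which those rows are listed. Concretely, since $H_{s,t}$ has rows $h_{t},\ldots,h_{s+1}$ and $y_{s+1:t}$ stacks $y_{t},\ldots,y_{s+1}$ in the same order,
\[
H_{s,t}^{\mathrm{T}}H_{s,t}=\sum_{k=s+1}^{t}h_{k}^{\mathrm{T}}h_{k},\qquad H_{s,t}^{\mathrm{T}}y_{s+1:t}=\sum_{k=s+1}^{t}y_{k}h_{k}^{\mathrm{T}}.
\]
Incrementing $t$ to $t+1$ simply prepends the row $h_{t+1}$ (and the entry $y_{t+1}$), which by the displayed identities adds the single term $h_{t+1}^{\mathrm{T}}h_{t+1}$ to the Gram matrix and $y_{t+1}h_{t+1}^{\mathrm{T}}$ to the cross-term. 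I would introduce the abbreviation $\tilde{y}_{s,t}\coloneqq H_{s,t}^{\mathrm{T}}y_{s+1:t}$; then the claimed identity $w_{s,t}=V_{s,t}\tilde{y}_{s,t}$ is immediate from the definition (\ref{eq:posterior_mean}), and the recursion $\tilde{y}_{s,t+1}=\tilde{y}_{s,t}+y_{t+1}h_{t+1}^{\mathrm{T}}$ with base case $\tilde{y}_{s,s+1}=y_{s+1}h_{s+1}^{\mathrm{T}}$ reads off directly.

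The remaining scalar recursions are then routine. The base case $t=s+1$ has $H_{s,s+1}=h_{s+1}$ a single row, so $V_{s,s+1}=(V_{0}^{-1}+h_{s+1}^{\mathrm{T}}h_{s+1})^{-1}$ straight from (\ref{eq:posterior correlation}). The accumulator $\|y_{s+1:t}\|^{2}=\sum_{k=s+1}^{t}y_{k}^{2}$ gives both the base case $y_{s+1}^{2}$ and the update $\|y_{s+1:t+1}\|^{2}=\|y_{s+1:t}\|^{2}+y_{t+1}^{2}$, while the shape $a_{s,t}=a+(t-s)/2$ in (\ref{eq:posterior shape}) increases by $\tfrac{1}{2}$ each step with $a_{s,s+1}=a+\tfrac{1}{2}$.

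The one step carrying any content is the covariance update. Using the Gram-matrix increment above together with (\ref{eq:posterior correlation}),
\[
V_{s,t+1}^{-1}=V_{0}^{-1}+H_{s,t+1}^{\mathrm{T}}H_{s,t+1}=V_{s,t}^{-1}+h_{t+1}^{\mathrm{T}}h_{t+1},
\]
a rank-one perturbation of $V_{s,t}^{-1}$. Applying the Sherman--Morrison formula $(A+uu^{\mathrm{T}})^{-1}=A^{-1}-A^{-1}uu^{\mathrm{T}}A^{-1}/(1+u^{\mathrm{T}}A^{-1}u)$ with $A=V_{s,t}^{-1}$ and $u=h_{t+1}^{\mathrm{T}}$ yields exactly the stated $V_{s,t+1}=V_{s,t}-V_{s,t}h_{t+1}^{\mathrm{T}}h_{t+1}V_{s,t}/(1+h_{t+1}V_{s,t}h_{t+1}^{\mathrm{T}})$. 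I do not anticipate a genuine obstacle here: the only thing to watch is the bookkeeping of the row-ordering convention in $H_{s,t}$ (the new observation is prepended, not appended), but because every quantity entering the recursions depends on $H_{s,t}$ only through the order-independent sums $H_{s,t}^{\mathrm{T}}H_{s,t}$ and $H_{s,t}^{\mathrm{T}}y_{s+1:t}$, this convention never actually matters, and the Sherman--Morrison step is entirely standard.
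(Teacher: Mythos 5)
Your proposal is correct and follows essentially the same route as the paper: the paper's proof consists precisely of the rank-one update $V_{s,t+1}^{-1}=V_{s,t}^{-1}+h_{t+1}^{\mathrm{T}}h_{t+1}$ followed by Sherman--Morrison, dismissing the remaining identities as straightforward accumulator updates. Your write-up merely makes explicit the order-insensitive sum structure of $H_{s,t}^{\mathrm{T}}H_{s,t}$ and $H_{s,t}^{\mathrm{T}}y_{s+1:t}$ that the paper leaves implicit (and your version avoids a small typo in the paper's display, which writes $V_{0}$ where $V_{0}^{-1}$ is meant).
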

\begin{proof}
The expression for $V_{s,t+1}$ follows from 
\[
V_{s,t+1}^{-1}=V_{0}+H_{s,t+1}^{\mathrm{T}}H_{s,t+1}=V_{0}+H_{s,t}^{\mathrm{T}}H_{s,t}+h_{t+1}^{\mathrm{T}}h_{t+1}=V_{s,t}^{-1}+h_{t+1}^{\mathrm{T}}h_{t+1},
\]
 and the Sherman-Morrison formula. The other expressions are quite
straight forward.
\end{proof}

\subsection{Interpretation and relation to GARCH}

It is widely recognized that returns data at daily or higher frequencies
often exhibit certain stylized features: 
\begin{enumerate}
\item long-run mean or median close to zero, and heavy tails;
\item long-run auto-correlation of returns which is small or decays quickly
with lag-length, but auto-correlation of absolute or squared returns
which decays slowly;
\item time-dependent volatility. 
\end{enumerate}
To explain the interpretation of the change-point model in this context,
consider GARCH$(1,1)$:
\begin{align}
y_{t} & =\varsigma_{t}\epsilon_{t},\label{eq:GARCH_returns}\\
\varsigma_{t}^{2} & =c_{0}+c_{1}y_{t-1}^{2}+\rho\varsigma_{t-1}^{2},\label{eq:GARCH_variance}
\end{align}
where $(\epsilon_{t})_{t\geq0}$ is a white noise process. This is
perhaps the most widely used time series model which accommodates
the stylized features described above:
\begin{enumerate}
\item in the original presentation of \citet{bollerslev1986generalized},
$(\epsilon_{t})_{t\geq0}$ were taken as i.i.d. standard Gaussian,
so that the marginal distribution of $y_{t}$ under (\ref{eq:GARCH_returns})
is a scale-mixture of zero-mean Gaussians. To further account for
heavy-tails, \citet{bollerslev1987conditionally} suggested instead
a $t$-distribution centered at zero for $(\epsilon_{t})_{t\geq0}$,
with unit scale parameter;
\item due to the independence of the $(\epsilon_{t})_{t\geq0}$ and the
centering of their common distribution at zero, it is easily seen
that the autocorrelation of $(y_{t})_{t\geq0}$ (assuming it exists)
is zero. The sequence of squared returns $y_{t}^{2}$ from GARCH$(1,1)$
is an ARMA process \citep[Thm 7, p.61]{anderson2009handbook} and
hence may exhibit non-trivial autocorrelation;
\item time-dependent volatility is modelled through the `conditional-variance'
equation (\ref{eq:GARCH_variance}).
\end{enumerate}
These properties manifest themselves in the predictive distributions
$p(y_{t+1}|y_{0:t})$ associated with GARCH$(1,1)$; if indeed $(\epsilon_{t})_{t\geq0}$
are unit scale and zero-centered student's-$t$ variables with $2a$
degrees of freedom, then:
\begin{equation}
p(y_{t+1}|y_{0:t})=\mathsf{St}(2a,0,\varsigma_{t+1}^{2}),\label{eq:GARCH_predictive}
\end{equation}
where by writing out (\ref{eq:GARCH_variance}),
\begin{equation}
\varsigma_{t+1}^{2}=c_{0}\sum_{s=0}^{t}\rho^{s}+c_{1}\sum_{s=0}^{t}\rho^{s}y_{t-s}^{2}+\rho^{t+1}\varsigma_{0}^{2}.\label{eq:GARCH_parameter}
\end{equation}

Let us now explain the connection to (\ref{eq:posterior_predictive}).
For purposes of exposition, suppose that the parameters $(\mu_{n})_{n\geq0}$
are omitted from the change-point model, in the sense that (\ref{eq:r_t_eqn})
is simplified to:
\begin{equation}
y_{t}=\alpha_{N(t)}y_{t-1}+\sigma_{N(t)}\epsilon_{t},\label{eq:model_omit_mu}
\end{equation}
and suppose the prior on each parameter pair $(\alpha_{n},\sigma_{n}^{2}$)
is just the marginal prior of these two parameters under (\ref{eq:prior_NIG}). 
\begin{prop}
\label{prop:approx}Omitting $(\mu_{n})_{n\geq0}$ in the sense of
(\ref{eq:model_omit_mu}) results in the following expression for
(\ref{eq:posterior_predictive}):
\begin{equation}
p(y_{t+1}|\tau_{t+1}=s,y_{s:t})=\mathsf{St}(2a+t-s,\hat{\alpha}_{s,t}y_{t},\hat{\sigma}_{s,t}^{2}),\label{eq:approx_predictive_dist}
\end{equation}
where
\begin{align}
\hat{\alpha}_{s,t} & \coloneqq\frac{\sum_{i=s}^{t-1}y_{i}y_{i+1}}{\delta_{1}^{-1}+\sum_{i=s}^{t-1}y_{i}^{2}},\label{eq:approx_params_alpha}\\
\hat{\sigma}_{s,t}^{2} & \coloneqq\left[(1-\hat{\alpha}_{s,t}^{2})\frac{\sum_{i=s}^{t-1}y_{i}^{2}}{2a+t-s}+\frac{2b+y_{t}^{2}-y_{s}^{2}-\delta_{1}^{-1}}{2a+t-s}\right]\left(1+\frac{y_{t}^{2}}{\delta_{1}^{-1}+\sum_{i=s}^{t-1}y_{i}^{2}}\right)\label{eq:approx_params_sigma}
\end{align}
\end{prop}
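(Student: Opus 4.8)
The plan is to avoid re-deriving anything from scratch, and instead specialize the earlier Proposition --- the one furnishing \eqref{eq:posterior_predictive} and \eqref{eq:posterior_mean}--\eqref{eq:posterior_scale} --- to the one-covariate regression \eqref{eq:model_omit_mu}. The first thing I would check is that the reduced model is itself a legitimate instance of the framework of Section~\ref{subsec:A-particular-instance}. Because $V_0=\mathrm{diag}(\delta_0^2,\delta_1^2)$ is diagonal, the marginal prior on $(\alpha_n,\sigma_n^2)$ obtained from the Normal-Inverse-Gamma law \eqref{eq:prior_NIG} is again Normal-Inverse-Gamma, with the same $a,b$ and with scalar prior variance $\delta_1^2$ for $\alpha_n$. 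Hence \eqref{eq:posterior_predictive} applies verbatim to \eqref{eq:model_omit_mu}, the only change being that the regressor $h_t=[1\;y_{t-1}]$ collapses to the scalar $h_t=y_{t-1}$, the design $H_{s,t}$ becomes the column vector $[y_{t-1}\;\cdots\;y_s]^{\mathrm T}$, and $V_0$, $V_{s,t}$, $w_{s,t}$, $b_{s,t}$ all become scalars.

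With this in place I would simply evaluate the scalar versions of \eqref{eq:posterior_mean}--\eqref{eq:posterior_scale}. The key reductions are $H_{s,t}^{\mathrm T}H_{s,t}=\sum_{i=s}^{t-1}y_i^2$ and $H_{s,t}^{\mathrm T}y_{s+1:t}=\sum_{i=s}^{t-1}y_iy_{i+1}$, from which $V_{s,t}=(\delta_1^{-1}+\sum_{i=s}^{t-1}y_i^2)^{-1}$ and $w_{s,t}=\hat\alpha_{s,t}$ follow at once, matching \eqref{eq:approx_params_alpha}. The degrees-of-freedom parameter $2a_{s,t}=2a+t-s$ and the location $h_{t+1}w_{s,t}=\hat\alpha_{s,t}y_t$ are then immediate from \eqref{eq:posterior_predictive}, as is the factor $1+h_{t+1}V_{s,t}h_{t+1}^{\mathrm T}=1+y_t^2/(\delta_1^{-1}+\sum_{i=s}^{t-1}y_i^2)$ multiplying the scale.

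The only step requiring real work is extracting \eqref{eq:approx_params_sigma} from $b_{s,t}/a_{s,t}$. Here I would substitute the scalar forms into \eqref{eq:posterior_scale}, using the window-shift identity $\|y_{s+1:t}\|^2=\sum_{i=s}^{t-1}y_i^2+y_t^2-y_s^2$ together with $w_{s,t}^{\mathrm T}V_{s,t}^{-1}w_{s,t}=\hat\alpha_{s,t}^2(\delta_1^{-1}+\sum_{i=s}^{t-1}y_i^2)$, and then group the $\sum_{i=s}^{t-1}y_i^2$ terms to expose the factor $(1-\hat\alpha_{s,t}^2)$. I expect this to be the main obstacle, not because it is deep but because the prior-precision term needs care: the exact grouping leaves a contribution $-\hat\alpha_{s,t}^2\,\delta_1^{-1}$ rather than the $-\delta_1^{-1}$ displayed in \eqref{eq:approx_params_sigma}, the two agreeing to within $O(\delta_1^{-1})$ when the prior on $\alpha$ is weakly informative ($\delta_1$ large). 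I read this as the source of the qualifier in the name of Proposition~\ref{prop:approx}, and would make the weakly-informative-prior approximation explicit at exactly this point. Dividing the resulting expression for $2b_{s,t}$ by $2a_{s,t}=2a+t-s$ and reinstating the scale factor then yields \eqref{eq:approx_params_sigma}, completing the proof.
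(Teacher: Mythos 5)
Your overall route is exactly the paper's: the paper's proof likewise specializes the earlier Normal-Inverse-Gamma proposition to the scalar regression \eqref{eq:model_omit_mu}, records the scalar forms of $w_{s,t}$, $V_{s,t}$, $a_{s,t}$, $b_{s,t}$ (identical to yours), evaluates $h_{t+1}w_{s,t}$ and the scale factor, and concludes with ``a little rearranging completes the proof.'' Your reductions $H_{s,t}^{\mathrm T}H_{s,t}=\sum_{i=s}^{t-1}y_i^2$, $H_{s,t}^{\mathrm T}y_{s+1:t}=\sum_{i=s}^{t-1}y_iy_{i+1}$, the window-shift identity, and $w_{s,t}^{\mathrm T}V_{s,t}^{-1}w_{s,t}=\hat\alpha_{s,t}^2(\delta_1^{-1}+\sum_{i=s}^{t-1}y_i^2)$ are all correct.

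The one place you diverge is the final step, and there your diagnosis is half right. You correctly detect that with the grouping $(1-\hat\alpha_{s,t}^2)\sum_{i=s}^{t-1}y_i^2$ the exact leftover constant is $-\hat\alpha_{s,t}^2\delta_1^{-1}$, so \eqref{eq:approx_params_sigma} as printed is not an identity. But the resolution is not a weakly-informative-prior approximation; the exact rearrangement is
\begin{equation*}
2b_{s,t}=(1-\hat{\alpha}_{s,t}^{2})\Bigl(\delta_{1}^{-1}+\sum_{i=s}^{t-1}y_{i}^{2}\Bigr)+2b+y_{t}^{2}-y_{s}^{2}-\delta_{1}^{-1},
\end{equation*}
which reproduces the printed constant $-\delta_1^{-1}$ exactly and localizes the discrepancy as a missing $\delta_1^{-1}$ in the numerator of the first bracketed term of \eqref{eq:approx_params_sigma}. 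In other words, the statement becomes exact once $\sum_{i=s}^{t-1}y_i^2$ there is replaced by $\delta_1^{-1}+\sum_{i=s}^{t-1}y_i^2$; this reading is consistent with the proposition being asserted as an identity (``results in the following expression'') rather than an approximation, so attributing the mismatch to the label ``approx'' is a misreading --- it is best regarded as a typo in the displayed formula. Your proposed fix is also quantitatively untenable in the paper's own setting: the experiments take $\delta_1=0.02$, so $\delta_1^{-1}=50$ is far from negligible and the $O(\delta_1^{-1})$ error you would accept is comparable to the data-dependent terms. Replace your limiting argument with the exact grouping above and your proof is complete and exact. One further slip you inherited from the paper rather than introduced: given $V_0=\mathrm{diag}(\delta_0^2,\delta_1^2)$, the scalar prior precision should strictly be $\delta_1^{-2}$; both your write-up and the paper's proof use $\delta_1^{-1}$, which is consistent only if $\delta_1$ is read as the prior variance of $\alpha_n$ in this section.
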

Before giving the proof let us compare the predictive densities (\ref{eq:approx_predictive_dist})
and (\ref{eq:GARCH_predictive}).
\begin{itemize}
\item Consider the number of parameters in (\ref{eq:GARCH_parameter}) and
in (\ref{eq:approx_params_alpha})-(\ref{eq:approx_params_sigma}).
The former involves $a,c_{0},c_{1},\rho$ and $\varsigma_{0}^{2}$.
The latter involves $a,b,\delta_{1}$, but these parameters can effectively
be removed by considering the uninformative prior limits $\delta_{1}\to\infty$,
$a,b\to0$ , under which $p(y_{t+1}|\tau_{t+1}=s,y_{s:t})$ remains
well-defined as a probability density assuming $y_{i}\neq0$ for some
$i\in\{s,\ldots t-1\}$. By contrast, there appears not to be a prior
distribution under which one can analytically integrate out $a,c_{0},c_{1},\rho$
in GARCH$(1,1)$, so one must estimate these parameters or integrate
them out numerically, which would complicate the fitting of a change-point
model.
\item Concerning the stylized features of returns described above, the median
of $p(y_{t+1}|y_{0:t})$ in (\ref{eq:GARCH_predictive}) is clearly
zero. If $y_{s:t}$ exhibits little lag-one auto-correlation, in the
sense that $\hat{\alpha}_{s,t}\approx0$, then the median of (\ref{eq:approx_predictive_dist})
is approximately zero also. However, if this auto-correlation is non-zero,
this will be captured in (\ref{eq:approx_predictive_dist}), both
in terms of the centering at $\hat{\alpha}_{s,t}y_{t}$ and through
$\hat{\sigma}_{s,t}^{2}$. Thus the change-point model accommodates
but does no insist upon stylized feature 1) and zero autocorrelations
of returns in stylized feature 2); the model is flexible enough to
explain away variations in data which cannot be well modelled in terms
of dynamic volatility, such as short--lived trends and brief periods
of correlated returns. The squared scale parameter $\varsigma_{t+1}^{2}$
in (\ref{eq:GARCH_parameter}) is an exponentially-weighted average
of the previous squared returns $(y_{s}^{2})_{s\leq t}$. This is
what allows GARCH(1,1) to capture the auto-correlation of squared
returns as per stylized feature 2). The predictive distribution in
(\ref{eq:approx_predictive_dist}) achieves this in a slightly different
manner: $\hat{\sigma}_{s,t}^{2}$ involves a uniformly-weighted average
of the squared returns, $(y_{s}^{2},\ldots,y_{t-1}^{2})$, where $s$
is the time of the most recent change-point appearing in the conditioning
in (\ref{eq:approx_predictive_dist}). Thus the change-point model
can represent memory in the process of squared returns whilst avoiding
the need for the parameter $\rho$ in GARCH(1,1). Finally, Regarding
stylized feature 3), obviously the change-point model accommodates
changing volatility from one change-point to the next.
\item The degrees of freedom in (\ref{eq:GARCH_predictive}) is constant
at $2a$; in (\ref{eq:approx_predictive_dist}) the degrees of freedom
is $2a+t-s$, hence increasing as the time since the most recent change-point,
$t-s$, grows. As per (\ref{eq:model_omit_mu}), the change-point
model assumes volatility is constant between change-points and this
increase in the degrees of freedom reflects accumulation of data since
the most recent change-point, assuming it is known or we are conditioning
upon it. Integrating out the time of the most recent change-point
results in the following identities:
\begin{align*}
p(y_{t+1}|y_{0:t}) & =\sum_{s=0}^{t}p(y_{t+1}|\tau_{t+1}=s,y_{s:t})p(\tau_{t+1}=s|y_{0:t}),\\
p(\tau_{t+1}=s|y_{0:t}) & =\begin{cases}
\sum_{u=0}^{t-1}\frac{G(t-u)-G(t-1-u)}{1-G(t-1-u)}\pi_{t}(u), & s=t,\\
\frac{1-G(t-s)}{1-G(t-1-s)}\pi_{t}(s), & s\in\{0,\ldots,t-1\}.
\end{cases}
\end{align*}
Thus for the change-point model, the predictive density $p(y_{t+1}|y_{0:t})$
is a mixture of densities of the form (\ref{eq:approx_predictive_dist}),
i.e. of student's-$t$ distributions with varying degrees of freedom,
centering and scale parameters, where the mixing distribution is derived
from the posterior change-point distributions $\pi_{t}$. The parameter
posteriors $p(\beta_{N(t)}|y_{0:t})$ and $p(\sigma_{N(t)}^{2}|y_{0:t})$,
i.e. also with the time of the most recent change-point integrated
out, have similar mixture representations, the details are left to
the reader.
\item Re-introducing the parameter $(\mu_{n})_{n\geq0}$ in (\ref{eq:r_t_eqn})
allows non-zero median returns to be modelled, which may be desirable
over short periods or to accommodate short-lived market trends, but
is not accommodated in (\ref{eq:GARCH_returns})-(\ref{eq:GARCH_returns}).
Thus again the change-point model is flexible: if the data indicate
the median/mean is zero, as per stylized feature 1), or not, then
this will be reflected in the predictive distribution (\ref{eq:GARCH_predictive}).
\end{itemize}
In summary, the model described in section \ref{subsec:A-particular-instance}
has the convenient property that the parameters $(\mu_{n},\alpha_{n},\sigma_{n}^{2})_{n\in\mathbb{N}_{0}}$
can be integrated out analytically, thus allowing it to interface
with the generic change-point model and inference recursion in section
\ref{subsec:A-generic-change-point}. Its predictive distributions
are closely related to those of GARCH(1,1) and it accommodates the
standard stylized features of returns, but is flexible enough to also
model short-lived auto-correlations and trends.

\begin{proof}
[Proof of Proposition \ref{prop:approx}]Omitting $(\mu_{n})_{n\geq0}$
results in the simplifications: $\beta_{n}=\alpha_{n},$ $H_{s,t}=[y_{t-1}\,\cdots\,y_{s}]^{\mathrm{T}}$,
$h_{t}=y_{t-1}$, and $w_{s,t}$ and $V_{s,t}$ become scalars, in
particular:
\begin{align*}
 & w_{s,t}=V_{s,t}\sum_{i=s+1}^{t}y_{i}y_{i-1},\\
 & V_{s,t}=(\delta_{1}^{-1}+\sum_{i=s}^{t-1}y_{i}^{2})^{-1},\\
 & a_{s,t}=a+\frac{t-s}{2},\\
 & b_{s,t}=b+\frac{1}{2}\left[\sum_{i=s+1}^{t}y_{i}^{2}-\frac{\left(\sum_{i=s+1}^{t}y_{i}y_{i-1}\right)^{2}}{\delta_{1}^{-1}+\sum_{i=s}^{t-1}y_{i}^{2}}\right].
\end{align*}
Turning to the parameters of (\ref{eq:posterior_predictive}), we
find the simplifications:
\[
h_{t+1}w_{s,t}=y_{t}\frac{\sum_{i=s+1}^{t}y_{i}y_{i-1}}{\delta_{1}^{-1}+\sum_{i=s}^{t-1}y_{i}^{2}},
\]
\begin{align*}
\frac{b_{s,t}}{a_{s,t}}(1+h_{t+1}V_{s,t}h_{t+1}^{\mathrm{T}}y_{t+1}) & =\frac{b+\frac{1}{2}\left[\sum_{i=s+1}^{t}y_{i}^{2}-\frac{\left(\sum_{i=s+1}^{t}y_{i}y_{i-1}\right)^{2}}{\delta_{1}^{-1}+\sum_{i=s}^{t-1}y_{i}^{2}}\right]}{a+\frac{t-s}{2}}\left(1+\frac{y_{t}^{2}}{\delta_{1}^{-1}+\sum_{i=s}^{t-1}y_{i}^{2}}\right).
\end{align*}
A little rearranging completes the proof. 
\end{proof}

\section{Numerical results for constituents of the S\&P 500}

\subsection{Data and parameter settings}

All numerical experiments were based on a data set of daily prices
for stocks which were constituents of the S\&P500 index continuously
from 1998 to mid 2013. The data set was taken from \url{https://quantquote.com/historical-stock-data}.
According to source these data are split/dividend adjusted. All returns
referred to below are daily closing log returns, i.e. $y_{t}=\log(\text{price at \ensuremath{t}})-\log(\text{price at \ensuremath{t-1})}$.

When applying the change-point model from section \ref{subsec:A-generic-change-point},
the prior on each of the inter-change-point times, e.g., $T_{n}-T_{n-1}$,
was taken to be a geometric distribution shifted so its support is
$\{1,2,\ldots\}$ rather than $\{0,1,\ldots\}$. The parameter of
the geometric distribution was set to $0.02$. The hyper-parameters
in the prior distribution (\ref{eq:prior_NIG}) were taken to be $a=b=5\times10^{-4}$,
corresponding to a fairly uninformative prior over $\sigma_{n}^{2}$'s;
and $\delta_{0}=10$ and $\delta_{1}=0.02$, corresponding respectively
to an uninformative prior over the $\mu_{n}$'s and a prior over the
$\alpha_{n}$'s which places substantial mass on $[-1,1]$. The approximation
method described in section \ref{subsec:Online-implementation} was
implemented with the number of support points $n$ taken to be 100.

\subsection{Application of the change-point model to AMZN}

The objective of this section is to illustrate the output from the
change-point model applied to a single time series.

The top plot in figure \ref{fig:amzn params} shows the returns for
AMZN. The second plot shows the number of trading days since the maximum-a-posterior
(MAP) most recent change-point. To be precise, let $t$ be time since
the start of the data set on 1/1/1998 in units of trading days and
let $\tau_{t}^{\mathrm{MAP}}\coloneqq\argmax_{s}\pi_{t}(s)$. Then
the plot shows $t-\tau_{t}^{\mathrm{MAP}}$ against the calendar date
corresponding to $t$.

The third and fourth plots show means and 95\% credible regions for
$p(\mu_{N(t)}|\tau_{t}=\tau_{t}^{\mathrm{MAP}},y_{\tau_{t}^{\mathrm{MAP}}:t})$
and $p(\alpha_{N(t)}|\tau_{t}=\tau_{t}^{\mathrm{MAP}},y_{\tau_{t}^{\mathrm{MAP}}:t})$,
i.e. the two marginals of (\ref{eq:parameter_posterior}) with $\tau_{t}^{\mathrm{MAP}}$
plugged in. The interpretation of these distributions are that they
are the posterior distributions of the parameters associated with
the MAP most recent change-point. The bottom plot in figure \ref{fig:amzn params}
is constructed by finding the mode and 95\% credible region of $p(\sigma_{N(t)}^{2}|\tau_{t}=\tau_{t}^{\mathrm{MAP}},y_{\tau_{t}^{\mathrm{MAP}}:t})$,
i.e. (\ref{eq:sig^2_posterior}) with $\tau_{t}^{\mathrm{MAP}}$ plugged
in, and then mapping through $x\mapsto\frac{1}{2}\log x$, to give
the corresponding point estimate and credible region for $\log\sigma_{N(t)}$.

\begin{figure}[H]
\hfill{}\includegraphics[width=1\textwidth]{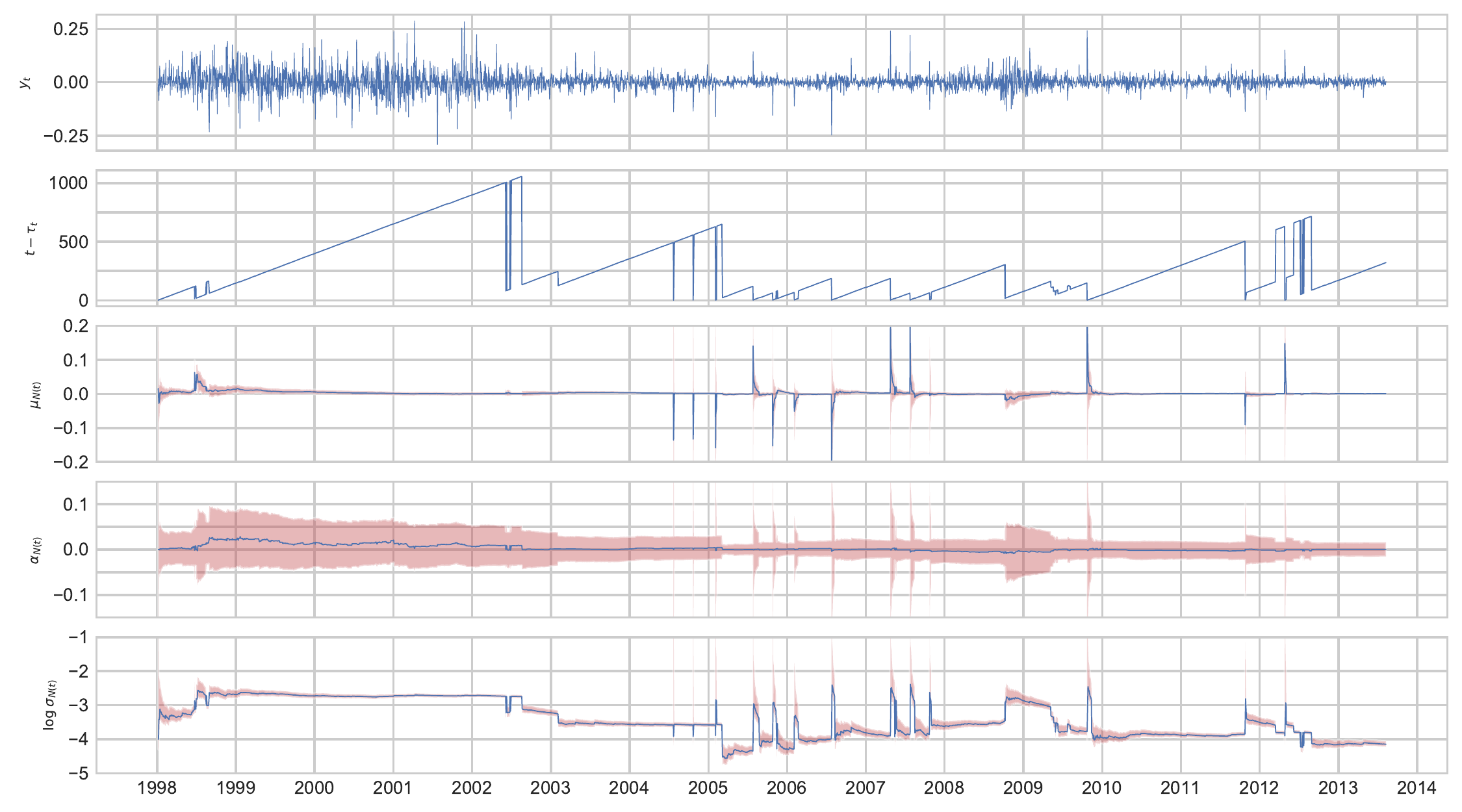}\hfill{}

\caption{\label{fig:amzn params}Change-point model applied to AMZN. From top
to bottom: adjusted daily closing log-returns; number of trading days
since MAP most recent change-point; posterior mean (blue) and 95\%
credible interval (red) for $\mu_{N(t)}$ conditional on MAP most
recent change-point; posterior mean (blue) and 95\% credible interval
(red) for $\alpha_{N(t)}$ conditional on MAP most recent change-point;
posterior mode (blue) and 95\% credible interval (red) for $\log\sigma_{N(t)}$.}
\end{figure}

To illustrate inference about change-point times beyond the simple
point estimate $\tau_{t}^{\mathrm{MAP}}\coloneqq\argmax_{s}\pi_{t}(s)$,
figure \ref{fig:amzn_cp} shows a snapshot of the returns from April
2007 until July 2009 and the change-point distributions $\pi_{t}$
for $t$ corresponding to 28/09/2008, 23/02/2009, 05/05/2009, 16/07/2009.
On 28/09/2008. i.e. just before the market crash, the change-point
distribution (second plot from top) shows a small amount of evidence
for a recent change, but most probability mass is associated with
24/07/2007 when the stock price surged after better-than-expected
Q2 results were announced. The third plot down, showing the change-point
distribution for $t$ corresponding to 23/02/2009 puts most of its
mass around the September 2008 market crash. In the fourth plot, corresponding
to 05/05/2009, the multiple modes in the distribution can be interpreted
as competing hypotheses about the most recent change point: the September
2008 market crash is amongst them, followed by crises in December
2008 and January-March 2009. The bottom plot picks up the change to
a period of lower volatility around the end of March 2009.

\noindent 
\begin{figure}[H]
\hfill{}\includegraphics[width=1\textwidth]{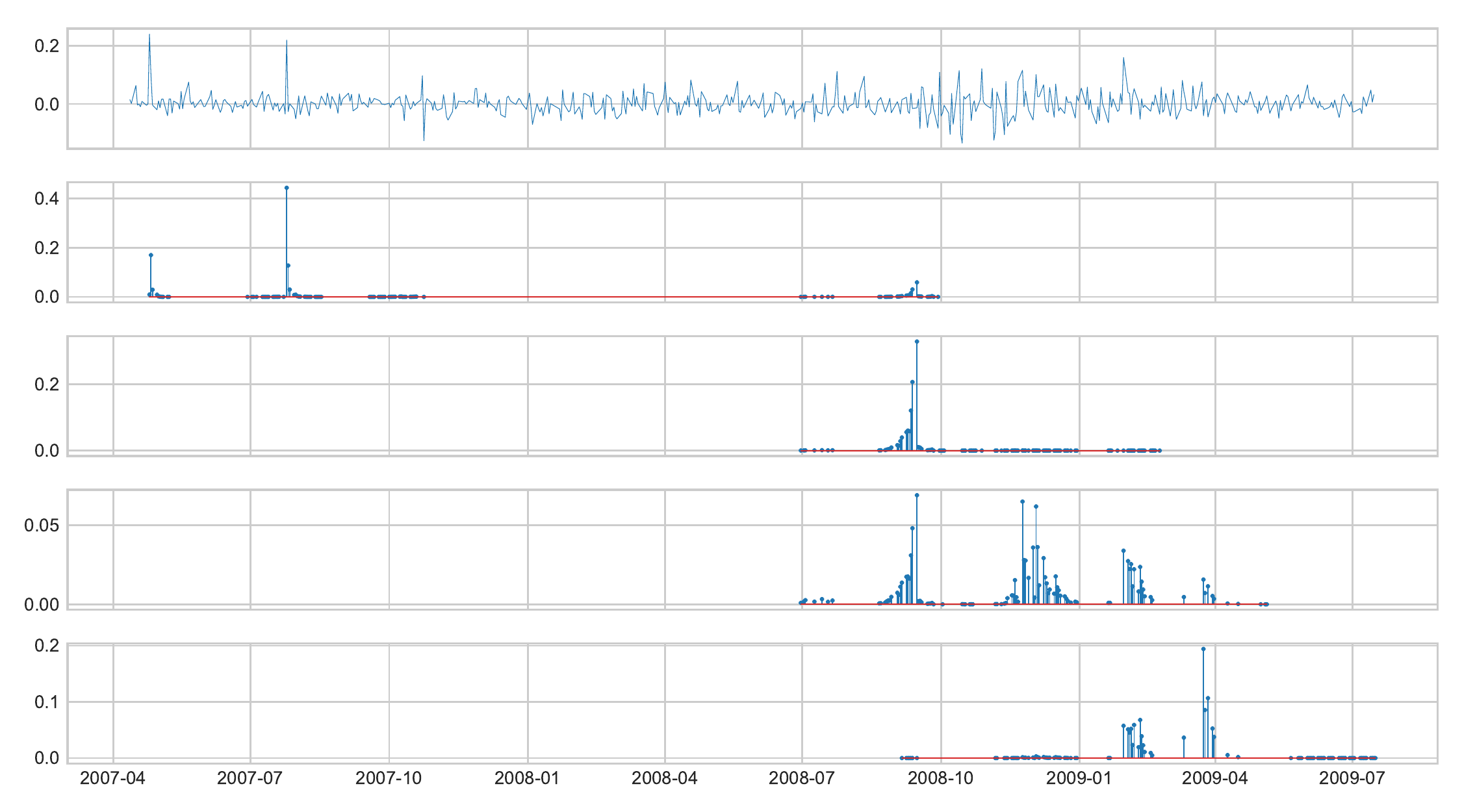}\hfill{}

\caption{\label{fig:amzn_cp}Change-point model applied to AMZN. Posterior
distributions over time of most recent change-point, $\pi_{t}$, for
$t$ corresponding to 29/09/2008, 23/02/2009, 05/05/2009, 16/07/2009.
Red lines on the horizontal axes indicate range of the support of
the distributions.}
\end{figure}

\noindent The top plot in figure \ref{fig:amzn_predictive} shows
the returns along with the 95\% credible region for each of the one-step-ahead
posterior predictive distributions $p(y_{t+1}|\tau_{t+1}=\tau_{t}^{\mathrm{MAP}},y_{\tau_{t}^{\mathrm{MAP}}:t})$,
i.e. (\ref{fig:amzn_predictive}) with $\tau_{t}^{\mathrm{MAP}}$
plugged in. The bottom plot shows these predictive credible regions
pushed forward to the price.
\begin{figure}[H]
\hfill{}\includegraphics[width=1\textwidth]{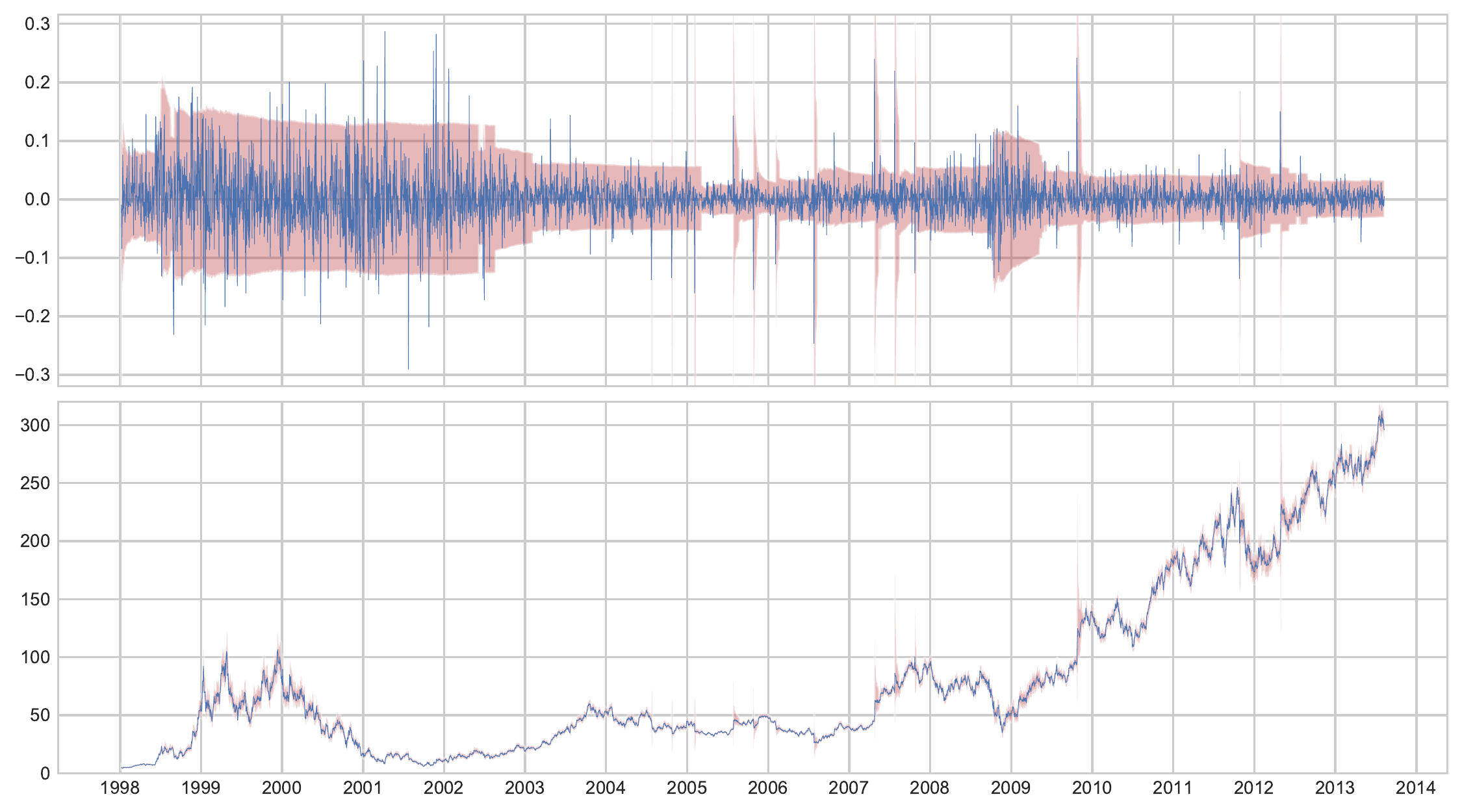}\hfill{}

\caption{\label{fig:amzn_predictive}Change-point model applied to AMZN. Blue
plot shows adjusted daily closing log-returns (top) and prices (bottom).
Red shading indicates posterior predictive 95\% credible interval
conditional on MAP most recent change-point. See text for definition.}
\end{figure}

\subsection{Hierarchical clustering}

Figure \ref{fig:Dissimilarity-matrix} shows the dissimilarity matrix
of Wassertein distances $W_{1}(\pi_{t}^{i},\pi_{t}^{j})$ as in (\ref{eq:wass_1})
across the first 80 S\&P 500 constituents by alphabetical order for
$t$ corresponding to 16/07/2009. The reason for considering only
80 constituents is to keep the following visual results simple and
easy to read. The date 16/07/2009 was chosen for purposes of illustration
as it post-dates the global financial crisis and the onset of the
subsequent recovery.

\noindent 
\begin{figure}[H]
\hfill{}\includegraphics[viewport=113.4bp 0bp 599.4bp 360bp,clip,width=1\textwidth]{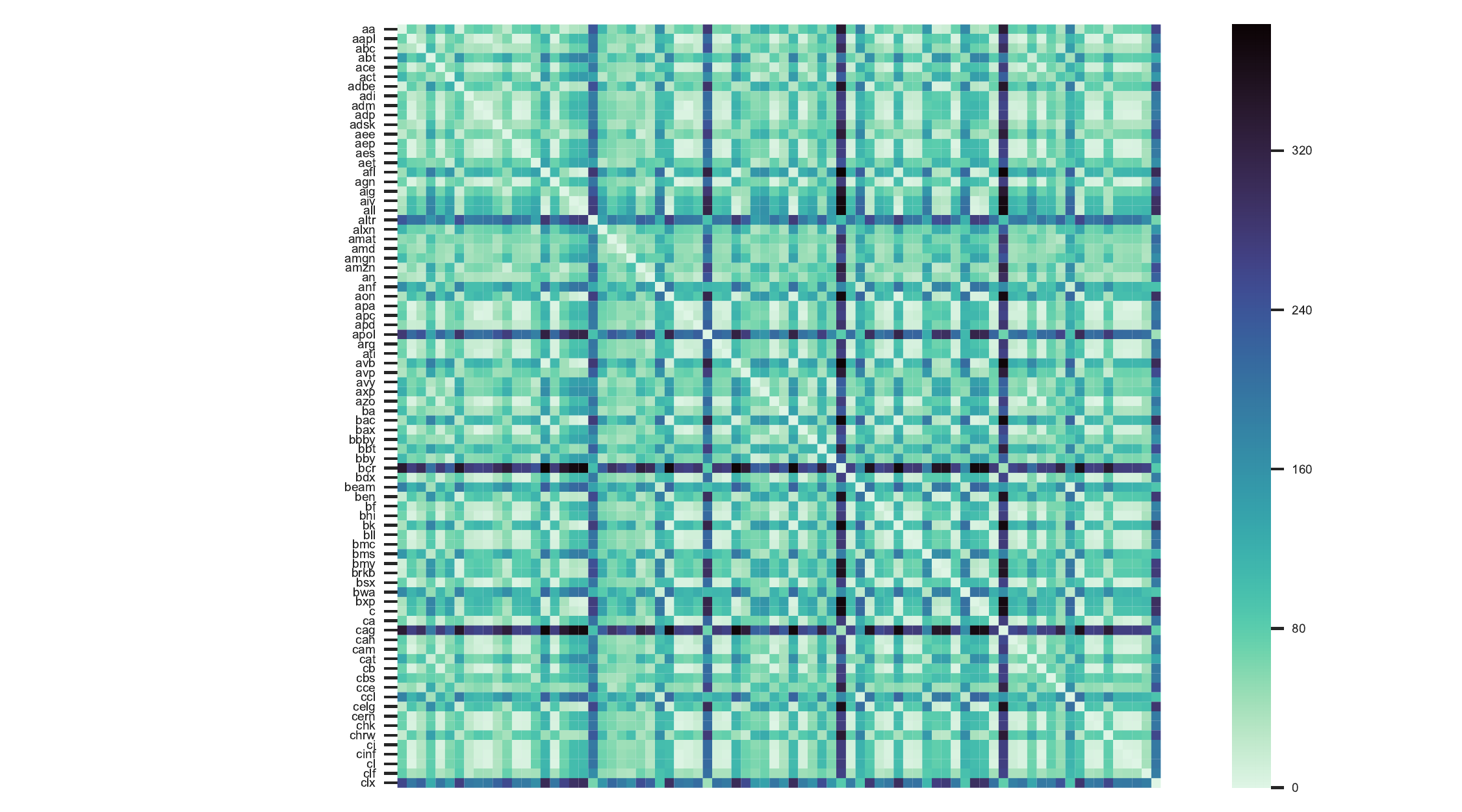}\hfill{}

\caption{\label{fig:Dissimilarity-matrix}Dissimilarity matrix for first 80
constituents of S\&P 500 for $t$ corresponding to 16/07/2009.}
\end{figure}

\noindent Whilst the dissimilarity matrix seems to show rich structure,
it is not easy to directly interpret. This is where hierarchical clustering
comes in: figure \ref{fig:Hierarchical-clustering-dendrogr} shows
the result of agglomerative clustering with the average linkage method,
implemented in Python using the Seaborn statistical data visualization
library, see \url{https://seaborn.pydata.org} and \url{https://SciPy.org}
for details of the underling linkage method.

This clustering method proceeds by initializing each stock in a separate
cluster, then sequentially combining nearby clusters and re-calculating
between-cluster distances. The output is a dendrogram, shown on the
right of figure \ref{fig:Hierarchical-clustering-dendrogr}, and a
re-ordering of the rows/columns of the dissimilarity matrix to respect
the structure of the dendrogram.

Once clusters of stocks are identified from this dendrogram, one may
then interrogate their respective change-point distributions. To illustrate
the idea, three clusters are highlighted in figure \ref{fig:Hierarchical-clustering-dendrogr}.
The first cluster consists of:
\begin{itemize}
\item AIV, Apartment Investment and Management, a real estate investment
trust;
\item AFL, AFLAC Incorporated, an insurance company;
\item AVB, AvalonBay Communities Real estate, an investment trust;
\item AON, Aon, an insurance broker, risk, retirement and health services
consulting company;
\item BK, Bank of New York;
\item BXP, Boston Properties, a real estate investment trust ;
\item ALL, Allstate, an insurance company;
\item BAC, Bank of America.
\end{itemize}
There is a clear theme of financial, real-estate and insurance sectors
to this cluster. Let us examine their change-point distributions $\pi_{t}$
for $t$ corresponding to 16/07/2009: they are shown in figure \ref{fig:2900 cluster 1}
and the common feature is probability mass around May 2009. It was
at this time that some stocks badly effected by the crisis in 2008
and early 2009 showed signs of recovery. Indeed inspecting the estimates
of $\sigma_{N(t)}^{2}$ for each of these stocks (not shown) reveals
there was a discrete in each of their volatilities around May 2009.

The second cluster is less sector-specific, consisting of:
\begin{itemize}
\item APA, Apache Corporation, a hydrocarbon exploration company;
\item AZO, AutoZone, an automotive parts retailer;
\item CHK, Chesapeake Energy, a hydrocarbon exploration company;
\item AAPL, Apple;
\item AGN, Allergan, a pharmaceutical company;
\item CL, Colgate-Palmolive, a consumer products company.
\end{itemize}
Inspecting figure \ref{fig:Hierarchical-clustering-dendrogr}, it
is clear that this cluster is one component of a larger cluster of
30+ stocks which have similar change-point distributions. Figure \ref{fig:2900 cluster 2}
indicates the feature they have in common is evidence of a volatility
change, or several volatility changes, in December 2008, but little
evidence of a change-point between then and June 2009. Broadly speaking,
these stocks were hit by the crisis in around October 2008, but their
volatility subsequently decreased sooner than the stocks in the first
cluster, around the end of 2008.

The third cluster is smaller, consisting of the three stocks:
\begin{itemize}
\item ABT, Abbott Laboratories, a healthcare company;
\item AXP, American Express;
\item CAT, Caterpillar, a construction equipment manufacturer.
\end{itemize}
Figure \ref{fig:2900 cluster 3} reveals that the feature these three
stocks have in common is evidence of a change-point around September
2008, about the time of the market crash, but little evidence of a
change-point between then and June 2009. In fact inspecting the estimated
volatility parameters $\sigma_{N(t)}^{2}$ for these stocks (not shown)
shows that all three of these stocks remained in a state of relatively
high volatility from September 2008 until around July 2009.

\noindent 
\begin{sidewaysfigure}[H]
\hfill{}\includegraphics[width=1\textwidth]{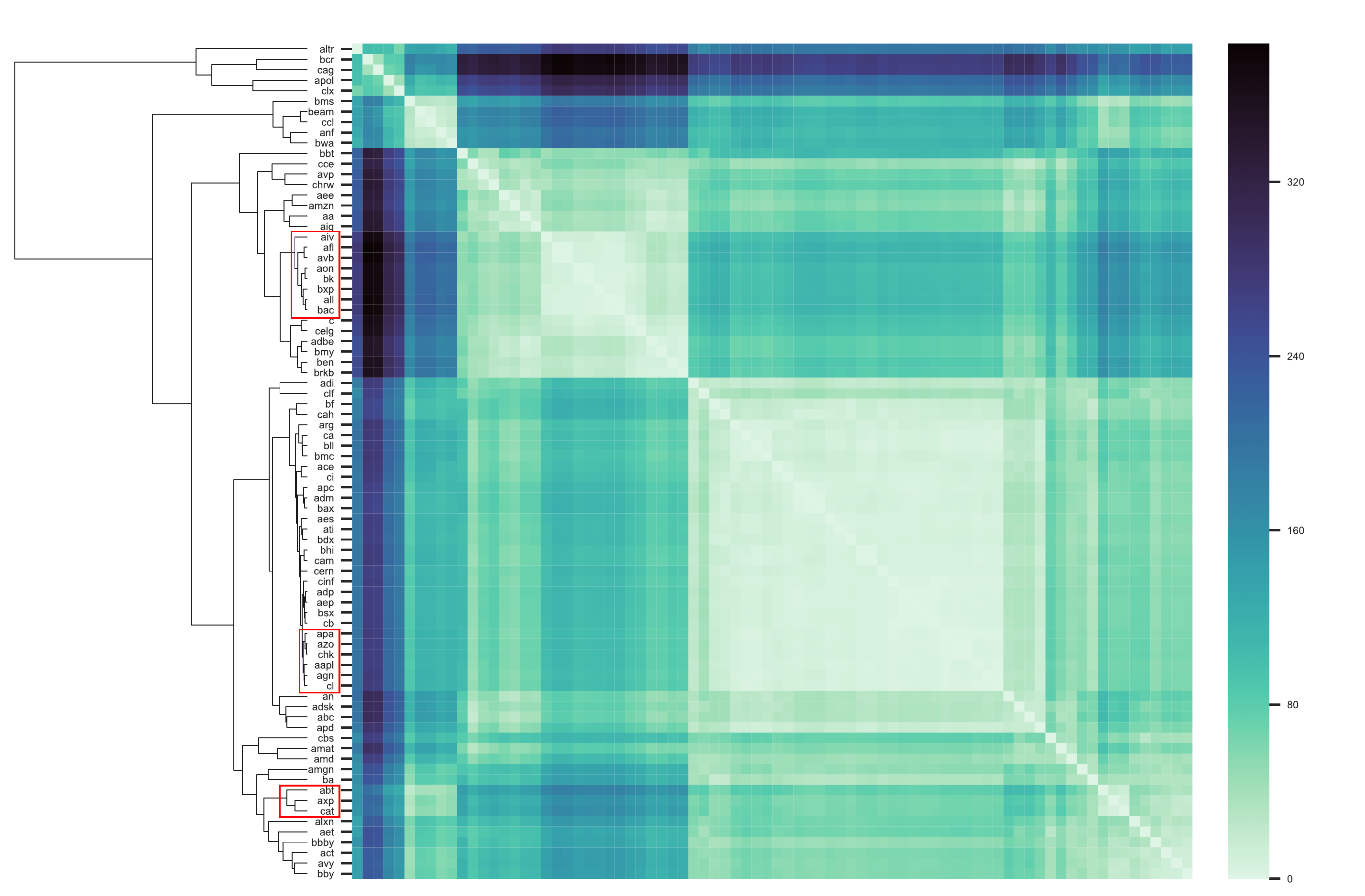}\hfill{}\caption{\label{fig:Hierarchical-clustering-dendrogr}Hierarchical clustering
dendrogram and re-ordered dissimilarity matrix for first 80 constituents
of S\&P 500 on 16/07/2009. Red highlighting of three clusters \{AIV,AFL,AVB,AON,BK,BXP,ALL,BAC\},
\{APA,AZO,CHK,AAPL,AGN,CL\}, \{ABT, AXP, CAT\}. The associated change-point
distributions are shown in figures \ref{fig:2900 cluster 1}-\ref{fig:2900 cluster 3}}
\end{sidewaysfigure}

\noindent 
\begin{figure}[H]
\hfill{}\includegraphics[width=1\textwidth]{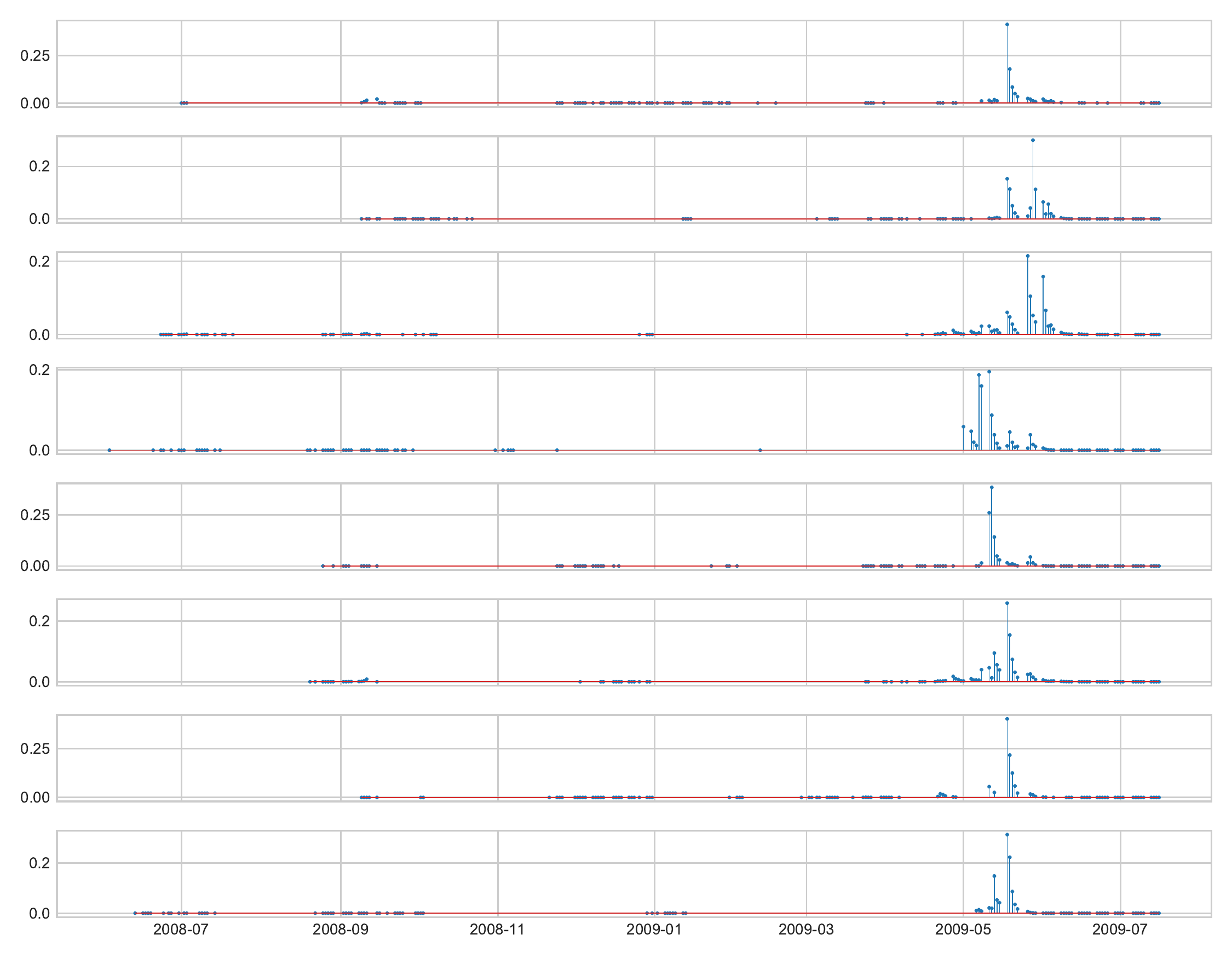}\hfill{}

\caption{\label{fig:2900 cluster 1}Posterior distributions of time of most
recent change-point as of 16/07/2009 for a cluster of S\&P500 constituents
extracted from the dendrogram in figure \ref{fig:Hierarchical-clustering-dendrogr},
from top to bottom: AIV, AFL, AVB, AON, BK, BXP, ALL, BAC.}
\end{figure}

\begin{figure}
\hfill{}\includegraphics[width=1\textwidth]{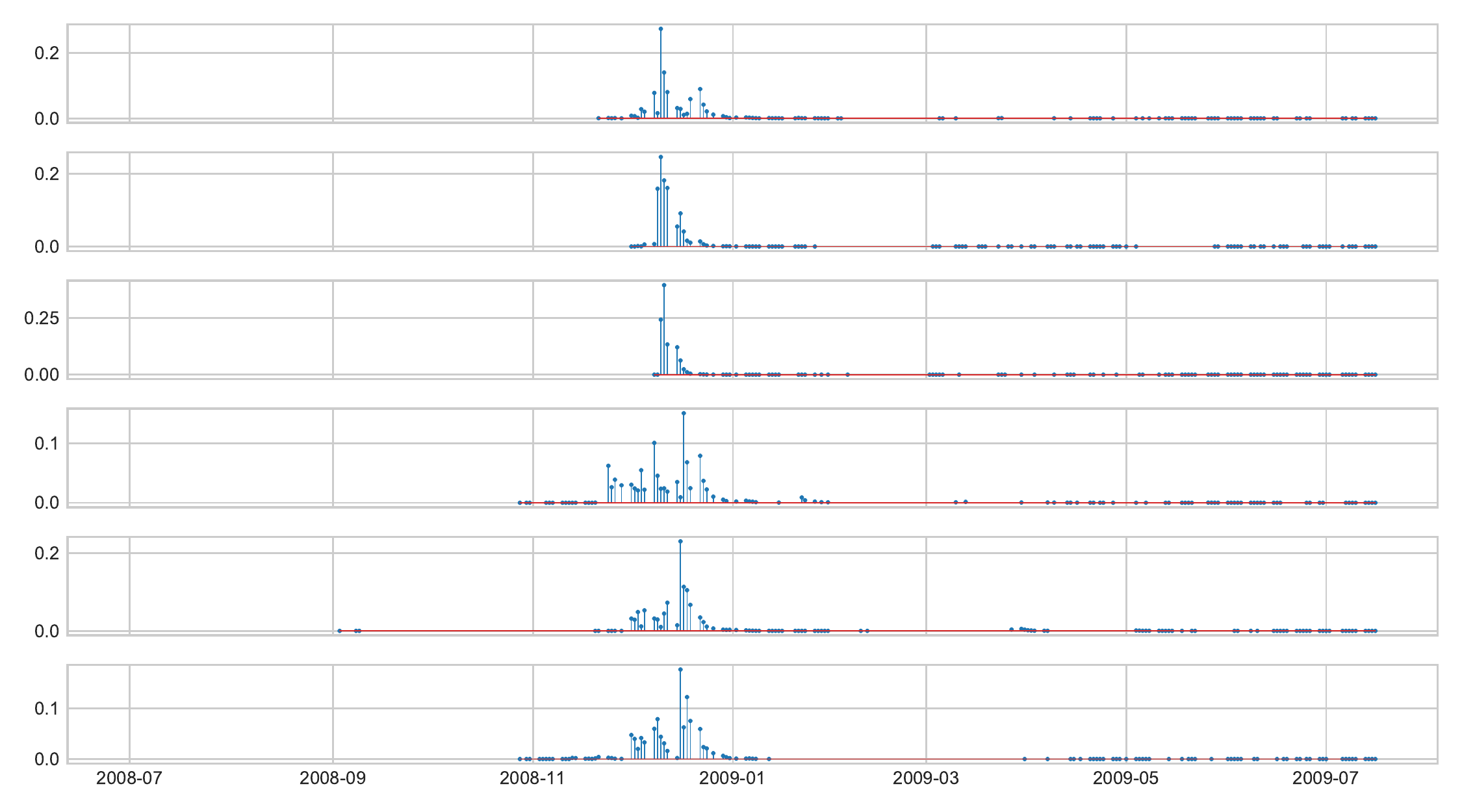}\hfill{}

\caption{\label{fig:2900 cluster 2}Posterior distributions of time of most
recent change-point as of 16/07/2009 for a cluster of S\&P500 constituents
extracted from the dendrogram in figure \ref{fig:Hierarchical-clustering-dendrogr},
from top to bottom: APA, AZO, CHK, AAPL, AGN, CL.}
\end{figure}
\begin{figure}
\hfill{}\includegraphics[width=1\textwidth]{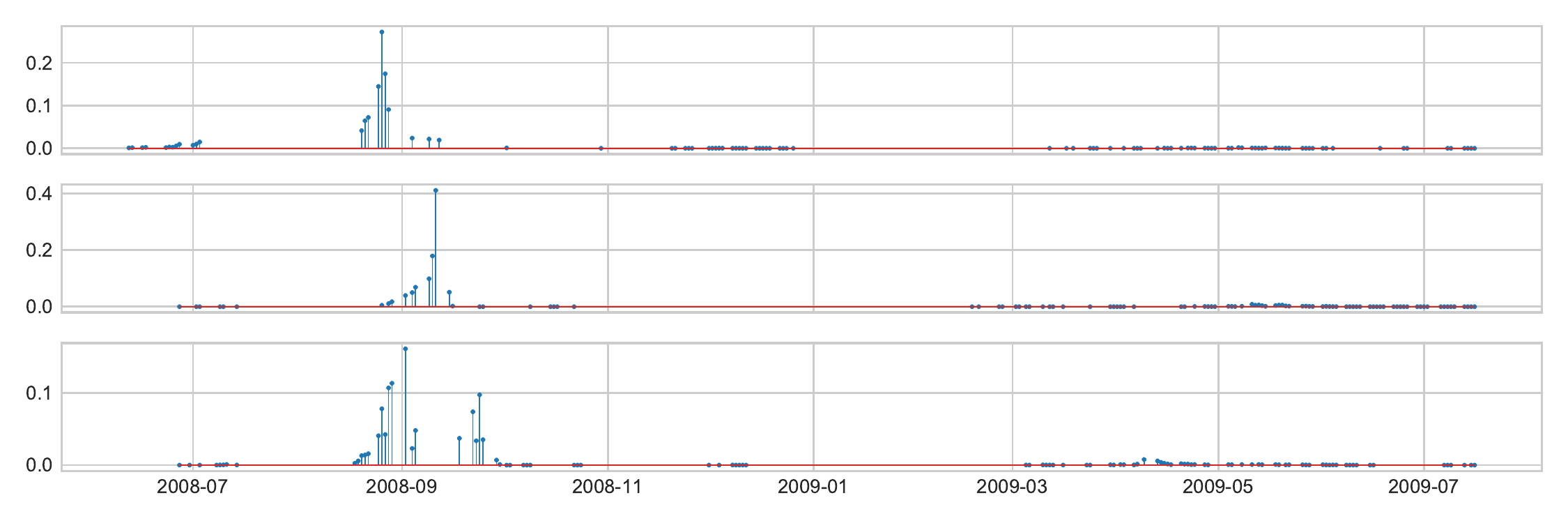}\hfill{}

\caption{\label{fig:2900 cluster 3}Posterior distributions of time of most
recent change-point as of 16/07/2009 for a cluster of S\&P500 constituents
extracted from the dendrogram in figure \ref{fig:Hierarchical-clustering-dendrogr},
from top to bottom: ABT, AXP, CAT.}
\end{figure}

\section{Extensions\label{sec:Extensions-and-directions}}

There are a number of avenues open for further investigation.

In terms of the modelling of individual time series, there are number
of ways the model from section \ref{subsec:A-particular-instance}
could be extended. As it stands, it doesn't explicitly model leverage
effects - that increases in volatility tend to be larger when recent
returns have been negative. A number of variants of the basic GARCH
model, such as Threshold-GARCH and Exponential-GARCH do model leverage
effects, but involve parameters for which conjugate priors are available.
It might be useful to find a half way point between such models and
that of section \ref{subsec:A-particular-instance}, to achieve more
accurate modelling, whilst retaining the analytic tractability which
allows parameters to be integrated out.

It could be desirable to develop a more principled approach to calibrating
the hyper-parameters $a,b$, $\delta_{0},\delta_{1}$, and the parameters
of the prior on the inter-change-point times. This could be approached,
for example, as a maximum likelihood or Bayesian inference problem.
Particle Markov Chain Monte Carlo methods for the latter are given
in \citep{whiteley2009}.

It could also be interesting to explore alternative probability metrics
and alternative clustering methods, for instance $k$-means using
Wasserstein Barycenters \citep{ye2017fast}.

\bibliographystyle{plainnat}
\bibliography{clustering}

\begin{thebibliography}{25}
\providecommand{\natexlab}[1]{#1}
\providecommand{\url}[1]{\texttt{#1}}
\expandafter\ifx\csname urlstyle\endcsname\relax
  \providecommand{\doi}[1]{doi: #1}\else
  \providecommand{\doi}{doi: \begingroup \urlstyle{rm}\Url}\fi

\bibitem[Adams and MacKay(2007)]{adams2007bayesian}
Ryan~Prescott Adams and David~J.C. MacKay.
\newblock Bayesian online changepoint detection.
\newblock \emph{arXiv preprint arXiv:0710.3742}, 2007.

\bibitem[Alonso et~al.(2006)Alonso, Berrendero, Hern{\'a}ndez, and
  Justel]{alonso2006time}
Andr{\'e}s~M. Alonso, Jos{\'e}~Ram{\'o}n Berrendero, Adolfo Hern{\'a}ndez, and
  Ana Justel.
\newblock Time series clustering based on forecast densities.
\newblock \emph{Computational Statistics \& Data Analysis}, 51\penalty0
  (2):\penalty0 762--776, 2006.

\bibitem[Andersen et~al.(2009)Andersen, Davis, Krei{\ss}, and
  Mikosch]{anderson2009handbook}
Torben~Gustav Andersen, Richard~A. Davis, Jens-Peter Krei{\ss}, and Thomas~V.
  Mikosch.
\newblock \emph{Handbook of financial time series}.
\newblock Springer Science \& Business Media, 2009.

\bibitem[Banerjee et~al.(2005)Banerjee, Merugu, Dhillon, and
  Ghosh]{banerjee2005clustering}
Arindam Banerjee, Srujana Merugu, Inderjit~S Dhillon, and Joydeep Ghosh.
\newblock Clustering with bregman divergences.
\newblock \emph{Journal of Machine Learning Research}, 6\penalty0
  (Oct):\penalty0 1705--1749, 2005.

\bibitem[Berndt and Clifford(1994)]{berndt1994using}
Donald~J. Berndt and James Clifford.
\newblock Using dynamic time warping to find patterns in time series.
\newblock In \emph{Proceedings of the 3rd International Conference on Knowledge
  Discovery and Data Mining}, volume~10, pages 359--370. Seattle, WA, 1994.

\bibitem[Bobkov and Ledoux(2016)]{bobkov2014one}
Sergey Bobkov and Michel Ledoux.
\newblock One-dimensional empirical measures, order statistics and kantorovich
  transport distances.
\newblock \emph{preprint}, 2016.

\bibitem[Bollerslev(1986)]{bollerslev1986generalized}
Tim Bollerslev.
\newblock Generalized autoregressive conditional heteroskedasticity.
\newblock \emph{Journal of Econometrics}, 31\penalty0 (3):\penalty0 307--327,
  1986.

\bibitem[Bollerslev(1987)]{bollerslev1987conditionally}
Tim Bollerslev.
\newblock A conditionally heteroskedastic time series model for speculative
  prices and rates of return.
\newblock \emph{Review of economics and statistics}, 69\penalty0 (3):\penalty0
  542--547, 1987.

\bibitem[Chopin(2007)]{chopin2007dynamic}
Nicolas Chopin.
\newblock Dynamic detection of change points in long time series.
\newblock \emph{Annals of the Institute of Statistical Mathematics},
  59\penalty0 (2):\penalty0 349--366, 2007.

\bibitem[Corduas and Piccolo(2008)]{corduas2008time}
Marcella Corduas and Domenico Piccolo.
\newblock Time series clustering and classification by the autoregressive
  metric.
\newblock \emph{Computational statistics \& data analysis}, 52\penalty0
  (4):\penalty0 1860--1872, 2008.

\bibitem[Fearnhead and Liu(2007)]{fearnhead2007line}
Paul Fearnhead and Zhen Liu.
\newblock On-line inference for multiple changepoint problems.
\newblock \emph{Journal of the Royal Statistical Society: Series B (Statistical
  Methodology)}, 69\penalty0 (4):\penalty0 589--605, 2007.

\bibitem[Lloyd(1982)]{lloyd1982least}
Stuart Lloyd.
\newblock Least squares quantization in {PCM}.
\newblock \emph{IEEE transactions on information theory}, 28\penalty0
  (2):\penalty0 129--137, 1982.

\bibitem[MacQueen(1967)]{macqueen1967some}
James MacQueen.
\newblock Some methods for classification and analysis of multivariate
  observations.
\newblock In \emph{Proceedings of the fifth Berkeley symposium on mathematical
  statistics and probability}, volume~1, pages 281--297. Oakland, CA, USA,
  1967.

\bibitem[Mantegna(1999)]{mantegna1999hierarchical}
Rosario~N. Mantegna.
\newblock Hierarchical structure in financial markets.
\newblock \emph{The European Physical Journal B-Condensed Matter and Complex
  Systems}, 11\penalty0 (1):\penalty0 193--197, 1999.

\bibitem[Marti et~al.(2016)Marti, Andler, Nielsen, and
  Donnat]{marti2016clustering}
Gautier Marti, S{\'e}bastien Andler, Frank Nielsen, and Philippe Donnat.
\newblock Clustering financial time series: how long is enough?
\newblock In \emph{Proceedings of the Twenty-Fifth International Joint
  Conference on Artificial Intelligence}, pages 2583--2589. AAAI Press, 2016.

\bibitem[Marti et~al.(2017)Marti, Nielsen, Bi{\'n}kowski, and
  Donnat]{marti2017review}
Gautier Marti, Frank Nielsen, Miko{\l}aj Bi{\'n}kowski, and Philippe Donnat.
\newblock A review of two decades of correlations, hierarchies, networks and
  clustering in financial markets.
\newblock \emph{arXiv preprint arXiv:1703.00485}, 2017.

\bibitem[Montero et~al.(2014)Montero, Vilar, et~al.]{montero2014tsclust}
Pablo Montero, Jos{\'e}~A Vilar, et~al.
\newblock Tsclust: An {R} package for time series clustering.
\newblock \emph{Journal of Statistical Software}, 62\penalty0 (1):\penalty0
  1--43, 2014.

\bibitem[Murphy(2012)]{murphy2012machine}
Kevin~P. Murphy.
\newblock \emph{Machine learning: a probabilistic perspective}.
\newblock MIT press, 2012.

\bibitem[Ni et~al.(2018)Ni, M{\"u}ller, Diesendruck, Williamson, Zhu, and
  Ji]{ni2018scalable}
Yang Ni, Peter M{\"u}ller, Maurice Diesendruck, Sinead Williamson, Yitan Zhu,
  and Yuan Ji.
\newblock Scalable bayesian nonparametric clustering and classification.
\newblock \emph{arXiv preprint arXiv:1806.02670}, 2018.

\bibitem[Otranto(2008)]{otranto2008clustering}
Edoardo Otranto.
\newblock Clustering heteroskedastic time series by model-based procedures.
\newblock \emph{Computational Statistics \& Data Analysis}, 52\penalty0
  (10):\penalty0 4685--4698, 2008.

\bibitem[Otranto(2010)]{otranto2010identifying}
Edoardo Otranto.
\newblock Identifying financial time series with similar dynamic conditional
  correlation.
\newblock \emph{Computational Statistics \& Data Analysis}, 54\penalty0
  (1):\penalty0 1--15, 2010.

\bibitem[Peyr{\'e} and Cuturi(2019)]{peyre2019computational}
Gabriel Peyr{\'e} and Marco Cuturi.
\newblock Computational optimal transport.
\newblock \emph{Foundations and Trends in Machine Learning}, 11\penalty0
  (5-6):\penalty0 355--607, 2019.

\bibitem[Vilar et~al.(2010)Vilar, Alonso, and Vilar]{vilar2010non}
Jos{\'e}~Antonio Vilar, Andr{\'e}s~M Alonso, and Juan~Manuel Vilar.
\newblock Non-linear time series clustering based on non-parametric forecast
  densities.
\newblock \emph{Computational Statistics \& Data Analysis}, 54\penalty0
  (11):\penalty0 2850--2865, 2010.

\bibitem[Whiteley et~al.(2009)Whiteley, Andrieu, and Doucet]{whiteley2009}
Nick Whiteley, Christophe Andrieu, and Arnaud Doucet.
\newblock Bayesian computational methods for inference in multiple
  change-points problems.
\newblock Technical report, University of Bristol, School of Mathematics, 2009.
\newblock URL \url{sites.google.com/view/nickwhiteley/}.

\bibitem[Ye et~al.(2017)Ye, Wu, Wang, and Li]{ye2017fast}
Jianbo Ye, Panruo Wu, James~Z Wang, and Jia Li.
\newblock Fast discrete distribution clustering using wasserstein barycenter
  with sparse support.
\newblock \emph{IEEE Transactions on Signal Processing}, 65\penalty0
  (9):\penalty0 2317--2332, 2017.

\end{thebibliography}

\end{document}